\newif\ifdraft
\draftfalse

\newif\iftechreport
\techreporttrue

\documentclass[journal]{IEEEtran}
%

%
\usepackage{setspace}
\usepackage{cite}
\usepackage{color}
\usepackage{graphicx}
\usepackage[cmex10]{amsmath}
\usepackage{amsthm,amsfonts}
\usepackage{algorithmic}
\usepackage{comment}

\pagenumbering{gobble}

\newtheorem{theorem}{Theorem}
\newtheorem{definition}{Definition}
\newtheorem{remark*}{Remark}
\newtheorem{lemma}{Lemma}

\hyphenation{op-tical net-works semi-conduc-tor}


\newcommand{\mycomment}[3]%
{\ifdraft \textcolor{#2}{{\bf\textsc{#1}:}~~#3} \else \fi}

\newcommand{\vikas}[1]%
{\mycomment{vikas}{blue}{#1}}

\newcommand{\yudong}[1]%
{\mycomment{yudong}{green}{#1}}

\newcommand{\rahul}[1]%
{\mycomment{rahul}{magenta}{#1}}

\begin{document}

\title{Detecting Overlapping Temporal Community Structure in Time-Evolving Networks}

\author{
  \IEEEauthorblockN{Yudong Chen\IEEEauthorrefmark{1}, \IEEEauthorblockN{Vikas
  Kawadia}\IEEEauthorrefmark{2} and \IEEEauthorblockN{Rahul
Urgaonkar}\IEEEauthorrefmark{2} }\\
\IEEEauthorblockA{\IEEEauthorrefmark{1}The University of Texas at Austin, ydchen@utexas.edu}\\
\IEEEauthorblockA{\IEEEauthorrefmark{2}Raytheon BBN Technologies, \{vkawadia,rahul\}@bbn.com}
}

\maketitle

\begin{abstract}
We present a principled approach for detecting overlapping temporal
community structure in dynamic networks. Our method is based on the following
framework: find the overlapping temporal community structure that maximizes a
quality function associated with each snapshot of the network subject to a
temporal smoothness constraint. A novel quality function and a smoothness
constraint are proposed to handle overlaps, and a new convex relaxation is used
to solve the resulting combinatorial optimization problem.  We provide
theoretical guarantees as well as experimental results that reveal community
structure in real and synthetic networks. Our main insight is that certain
structures can be identified only when temporal correlation is considered
\emph{and} when communities are allowed to overlap.  In general, discovering
such overlapping temporal community structure can enhance our understanding of
real-world complex networks by revealing the underlying stability behind their
seemingly chaotic evolution. 
\end{abstract}


\section{Introduction}

Communities are densely connected groups of nodes in a network. Community
detection, which attempts to identify such communities, is a fundamental
primitive in the analysis of complex networked systems that span multiple
disciplines in network science such as biological networks, online social networks, epidemic
networks, communication networks, etc. It serves as an important tool for
understanding the underlying, often latent, structure of networks and has a wide
range of applications: user profiling for online marketing, computer virus
spread and spam detection, understanding protein-protein interactions, content
caching, to name a few.  The concept of communities has been generalized to
overlapping communities which allows nodes to belong to multiple communities at
the same time. This has been shown to reveal the latent structure at multiple levels
of hierarchy.

Community detection in static networks has been studied extensively (see ~\cite{Fortunato2010community} for a
comprehensive survey), but has primarily been  applied to social networks and
information networks.  Applications to communications networks have been
few.
Perhaps this is
because communication networks (and in particular wireless networks) change at a
much faster timescale than social and information networks, and the science of
community detection in time-varying networks is still developing.  In this paper, 
we hope to narrow this gap by providing efficient techniques for detecting communities in
networks that vary over time while allowing such communities to be
overlapping as we elaborate below.

Temporal community detection \cite{evolutionaryclustering, Facetnet ,
kim2009particle, estrangement} aims to identify how communities emerge, grow,
combine, and decay \emph{over time}. This is useful in practice because most
networks of interest, particularly communication networks, are time-varying.
Typically, temporal community detection enforces some continuity or
``smoothness'' with the past community structure as the network evolves. While
one could apply static community detection independently in each snapshot, this
fails to discover small, yet persistent communities because, without the
smoothness constraints, these structures would  be
buried in noise and thus be unidentifiable to static methods.

In this paper, we propose a principled framework that goes beyond 
regular temporal communities and incorporates the concept of \emph{overlapping temporal
communities (OTC)}. In this formulation, nodes can belong to multiple communities at
any given time and those communities can persist over time as well.   This
allows us to detect even more subtle persistent structure that would otherwise
be subsumed into larger communities. We illustrate the various notions of
community structure in Fig.~\ref{fig:schematic}. As we will demonstrate in
Section~\ref{sec:expt} using both synthetic and real-world data, our framework is
able to correctly identify such community structure.

\begin{figure}[!t]
\centering
\includegraphics[width=0.8\linewidth, height=0.6\linewidth]{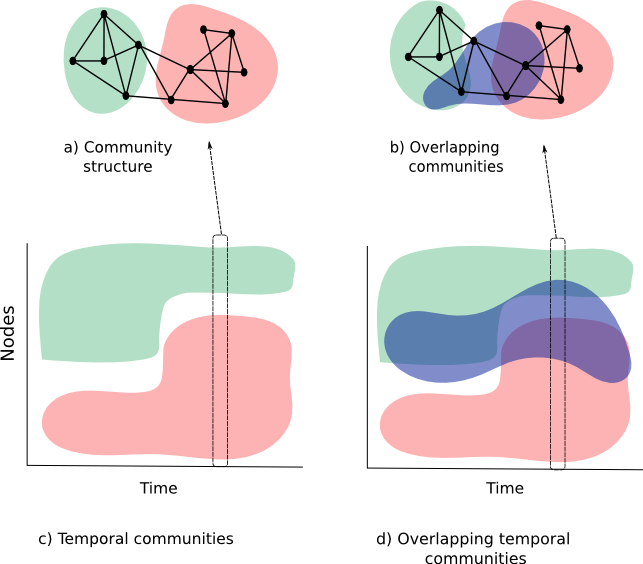}
\caption{A schematic illustrating the various notions of community structure in
networks. Panel a) shows a typical community structure in an example network. If
one uses a quality function and methods that look for overlapping community
structure on the same network, one could find a structure shown in panel b).
When the network is time-varying, we illustrate the temporal community structure
by showing the communities that a node belong to over time. The top panels
are single snapshots of the network evolution in the bottom panels. The
(non-overlapping) temporal community structure in panel c) reveals how
communities change with time. The overlapping temporal community structure shown
in panel d) can uncover deeper hidden patterns such as small communities
persistent over time (shown in blue). }

\label{fig:schematic}
\end{figure}

Knowing the OTC structure of an evolving network 
is useful because, although the observed network may change rapidly, its latent
structure often evolves much more slowly. For example, contact-based
social networks might change from day to day due to people's varying
daily activities, but the social groups (e.g. family, friends, colleagues)
that people belong to are much more stable. Identifying such latent time-persistent
structure can reveal the fundamental rules governing the seemingly chaotic
evolution of real-life complex networks. In addition, knowledge of the
times when significant changes occur could be used for predicting
network evolution.

The OTC structure of networks has many applications to designing communication
networks as well. For example, it can be used in efficient distributed storage of
information in a wireless network so that average access latency is
minimized~\cite{slinky}. The OTC structure can also be used to select
relay nodes and design routing schemes in a disruption tolerant wireless
networks. Another application is to devise real-life mobility models for
analysis and evaluation of network protocols.  We elaborate on these
applications in Section~\ref{sec:apps}.

\textbf{Our approach:} We describe the key ideas behind our techniques.  A naive
approach to temporal community detection is to perform static community
detection independently in each snapshot. The limitations of this approach are
well-documented~\cite{estrangement}, as it is very sensitive to even minor
changes in the network. The approach can be extended to detect OTC structure as
well but has the same limitations. Past work, including
\cite{evolutionaryclustering} and  \cite{estrangement} argue that temporal
communities can be detected if an explicit smoothness constraint that captures
the distance to past partitions is enforced. With the smoothness constraint, it
is possible to go beyond static methods and detect small persistent communities,
as information at multiple snapshots is considered \emph{together}.  In this
paper, we build on the above intuition and propose an approach for detecting OTC
structure using temporal information. Our approach is a novel convex relaxation
of the following combinatorial problem: find the temporal community structure
that maximizes a quality function associated with each snapshot subject to a
temporal smoothness constraint.  To handle overlapping communities, we use
generalizations of the quality function proposed in ~\cite{chen2012overlap} and
the temporal smoothness measure proposed in ~\cite{estrangement}.  While the
quality function favors densely connected groups, the smoothness constraint
promotes persistent structure. Our formulation is fairly general and allows
other quality functions and smoothness metrics to be used.  Further, it is
naturally applicable to overlapping temporal communities and does not require
any ad hoc modifications.  Unlike most existing approaches that use greedy
heuristics to solve the resulting NP-hard problem of optimizing over the
combinatorial set of all partitions (or covers when overlaps are allowed), we
use a tight convex relaxation of this set via the trace norm. This not only
results in a convex optimization problem that can be solved efficiently using
existing techniques, but also enables us to obtain \emph{a priori} guarantees on
the performance of our method, and provides valuable insight. In particular, our
analysis shows that, under a natural generative model, our method is able to
recover communities that persists for $m$ snapshots and have size
$K\ge\sqrt{\frac{n}{m}}$, where $n$ is the number of nodes. This result
highlights the \emph{benefit} of utilizing temporal information: with more
snapshots, we are able to detect smaller communities. We believe this specific
relation is novel, and applies beyond the particular methods in this
paper. We note that our approach can detect non-overlapping temporal
communities as well.



To summarize, we provide the first principled formulation
of the problem of detecting overlapping temporal communities. A critical piece
of the formulation is the quality function for quantifying the community structure
in any snapshot, and a distance function to ensure contiguity with the past
community structure. To the best of our knowledge, we are the first ones to
propose such functions for overlapping communities. We provide a
convex relaxation and hence an efficient way to solve the optimization problem,
while most existing work relies on greedy heuristics.  In addition, we provide theoretical
guarantees on the performance of our method, and we discuss the insights we gain from the guarantees. Finally, we evaluate our method using
several synthetic and real network data-sets and illustrate its efficacy. We also discuss
applications of our method to communication networks. 




\textbf{Remark on terminology: } In the sequel, we use \emph{cluster} and \emph{community} interchangeably, both allowing overlaps.

\section{Related Work}

There is a long line of work on community detection which has been
comprehensively surveyed in~\cite{Fortunato2010community}. 
Here, we focus on work that is most relevant to our approach.
In particular, \cite{agarwal2008modularity} presents a convex formulation for
optimizing modularity~\cite{Newman2006}, a well-known quality function for static
non-overlapping communities. Our convex formulation is completely different,
and our allow overlapping and temporal communities. 
Our formulation is related to low-rank matrix recovery techniques~\cite{candes2009robustPCA,chandrasekaran2011siam}. This line of work typically
uses trace norm as a convex surrogate of the non-convex rank function, and
similarly the $ \ell_1 $ norm for sparsity. In this paper, we use trace
norm as a relaxation for the set of covers, a combinatorial and
non-convex set, and the (weighted) $ \ell_1 $ norm as the quality
function. Similar relaxation for static clustering without overlaps is considered
in~\cite{chen2012sparseclustering}.
Our approach for dealing with overlaps is very different from exiting work; for a survey
we refer to \cite{overlapping2012Survey} and
section XI of \cite{Fortunato2010community}. 
In the rest of this section we focus on an overview of temporal clustering.

Most existing work on temporal clustering can be divided into two categories: 1) maximize quality
function subject to smoothness constraints; 2) slightly modify the clustering structure
from the previous snapshot.

The first approach starts with \cite{evolutionaryclustering}, which proposes the framework of
Evolutionary Clustering that aims to optimize a combination of a snapshot
quality and a temporal smoothness cost. The work in \cite{estrangement} argues
that a specific choice of the temporal cost, namely estrangement, works well.
In \cite{Facetnet} the authors uses the KL-divergence in both the snapshot quality and
temporal cost, and reformulates the problem using non-negative matrix
factorization in order to obtain soft clusterings. In \cite{kim2009particle}
a particle-density based method is proposed to optimize the clustering objective. 
All of these works use greedy algorithms to solve the optimization problem, which only guarantees convergence to a local optimum. Our formulation is similar to the Evolutionary Clustering framework, but we are able to use 
convex optimization via a reasonable relaxation.

The second class of methods typically work as follows: each time the network changes, they
modify the clustering structure to reflect the change according to some predefined
rules. Smoothness is maintained since the modification would not change the
clustering too much. For example, \cite{cazabet2010overlap} proposes iLCD (intrinsic
Longitudinal Community Detection) algorithm, which updates, merges, and creates
communities based on the previous clustering; overlap is allowed.
\cite{nguyen2011adaptive} adopts a similar idea, but does not allow a node to
belong to multiple clusters while follow-up work \cite{nguyen2011overlapping} removes this restriction. However, all of these works use update rules that are based on heuristics; some of them might produce duplicates or very small communities and need to use ad hoc procedures to remove them. Unlike our work, they do not provide any analytical guarantees.


Other existing approaches include \cite{Graphscope,timefall,Mucha2010},
which use objective functions that essentially measure both snapshot quality and temporal smoothness.
Also, \cite{chi2009evolutionary}
propose a method to detect communities in multi-dimensional networks. None of
these however detect overlapping communities and simultaneously detect their
evolution over time.


\section{Formulation and Algorithm}

\label{sec:model}

In this section, we formulate the problem and describe our algorithm. We consider the following natural formulation of OTC detection. Suppose we are given $T$
snapshots of a network with $ n $ nodes in terms of adjacency matrices $A^{t}$,
$t=1,\ldots,T$ \footnote{We use the convention $A_{ii}^{t}=1$.}. Our general
goal is, at each time $ t $, to assign each node to a number of clusters so that
the edge densities within clusters are higher than those across clusters, and
that the assignment does not change rapidly with time. Note that each node might
be assigned to multiple clusters, and clusters can overlap. A node might also be 
associated with no cluster; these nodes are called outliers, and are common in real networks.
Mathematically,  let $ r^t $ be the
total number of communities at time $ t $. the value of $r^t$ is, of course, not known a priori.
We would like to find $ T $ \emph{covers with outliers}, where a
cover $\mathcal{C}^t$ with outliers  means a collection of $ r^t $ subsets $ \mathcal{C}^t =
\{C^t_k, k=1,\ldots,r^t \}$ with $C^t_k\subseteq \{1,\ldots,n\} $; again note that we allow outlier nodes that do not belong to any of the subsets. For convenience, we will use cover in the sequel when we actually mean cover with outliers.

To make this formulation concrete, there are several questions that need to be
answered. 1) How to concisely represent a cover? 2) When overlaps are allowed, how
to measure the quality of a cover? 3) In particular, how to avoid degenerate
solutions? For example, declaring each edge as a cluster would make
the in-cluster edge density $ 1 $ and across-cluster density $ 0 $, but is
an undesirable solution providing little information. Similarly, 
producing a cluster that differs from another only by one node hardly reveals
any additional structure. 4) How to enforce temporal smoothness when overlap
is present? 5) How to solve the resulting optimization problem over covers?

In the remainder of this section, we present our precise approach and address
the above questions.

\subsection{Cover Matrix} Our first step, and also a key to later development,
is to adopt a matrix representation of a cover. We use the following
representation from \cite{chen2012overlap}.

\begin{definition}[Matrix Representation of a Cover]
A matrix $ Y \in \mathbb{R}^{n\times n}$ represents a cover $ \mathcal{C} = \{C_k \} $ if $
Y_{ij} = \left|\{C \in \mathcal{C} : i\in C, j\in C\}\right| $. That is, $ Y_{ij} $ equals the
number of clusters that include both node $ i $ and $ j $.
\end{definition}
Each cover has a unique matrix
representation. \yudong{here I do not claim anything for the other direction.} To see this, let us introduce the notion of a cluster assignment
matrix.
\begin{definition}[Cluster Assignment Matrix] $ U \in \mathbb{R}^{n\times r}$ is
the cluster assignment matrix of a cover $ \mathcal{C} = \{C_k, k=1,\ldots,r\} $
 if $ U_{ik} = 1 $ when $ i \in C_k $ and zero otherwise.
\end{definition} 

The cluster assignment matrix $ U $ is another representation of a cover which
shows the clusters that each node belongs to.
Clearly each cover corresponds to a unique $ U $, and each $ U $ corresponds to a unique $ Y $ via the factorization $ Y = UU^\top $ (the $ (i,j) $-th entry of the matrix $ UU^\top $ is the inner product of the $ i $-th and $ j $-th rows of $ U $, which, due to the structure of $ U $, equals the number of shared clusters of node $ i $ and $ j $, i.e., $ Y_{ij} $). In the sequel we will mainly use $ Y $ as
the optimization variable, but the factorization is useful
later for post-processing.

Another way to view the cover matrix $ Y $ is that it assigns to each pair of nodes $(i,j)$ a ``similarity level'' $ Y_{ij} $, measured by the number of shared clusters
between $ i $ and $ j $ \cite{chen2012overlap}. When there is no overlap, the
assigned similarity level is either $ 1 $ ($ i,j $ assigned to the same cluster)
or $ 0 $ (assigned to different clusters). When overlaps are allowed,
nodes sharing many clusters are considered more similar. In contrast, the
network adjacency matrix $ A $ can be viewed as the \emph{observed} similarity
level. With this in mind, we can think of the general objective of OTC detection as: find a series of covers $ Y^t $ such that the assigned similarity level is closed to the observed one at each $ t $, and
the covers change smoothly over time. In general the number of clusters that include both node $i $ and $ j $ might be greater than $ 1 $, so the assigned similarity is also above $ 1 $.

\subsection{Overlapping Temporal Community Detection}

We now give a precise formulation of the above general objective. We adopt an
optimization-based approach to OTC Detection. In particular, we
consider the following framework:
\begin{eqnarray}
\max_{\{Y^t\}} &  & \sum_{t=1}^{T} f_{A^t}(Y^{t}) \label{eq:opt}\\
\textrm{s.t.} &  & \sum_{t=1}^{T-1 }d_{A^{t+1},A^t}(Y^{t+1},Y^{t}) \le\delta, \nonumber \\
 &  & Y^{t}\textrm{ represents a cover}, t=1,\ldots,T. \nonumber
\end{eqnarray}
Here $f_{A^t}(Y^{t})$ is the snapshot quality, which serves two purposes: 1) it
measures how well the cover $Y^{t}$ reflects the network $A^{t}$, i.e., the
closeness between the assigned similarity level encoded in $ Y^t $ and the
observed similarity level in $ A^t $, and 2) it prevents the algorithm from
over-fitting, e.g., generating duplicate communities or many small communities
overlap with each other. The function $d_{A^{t+1},A^t}(Y^{t+1},Y^{t})$ is a
distance function that measures the difference between the covers at time $t+1$
and $t$. Consequently, the first constraint in the above formulation ensures
that the covers evolve smoothly over time. This constraint prefers the
evolutionary path with fewer changes and reflects the inertia inherent in
evolution of groups in real life networks. 

In this paper we focus on concave $f$ and convex $d$ (w.r.t. $\{Y^t\}$). This
covers many existing methods for clustering without overlap. For example, $f$
can be the modularity function~\cite{Newman2006} $ f_A(Y) = \sum_{i,j} \left(
A_{ij} - \frac{k_i k_j}{2M}\right) Y_{ij} $ (here $ k_i $ is the degree of node
$i$ in $ A $,  $ M $ is the total number of edges, and we ignore the
pre-constant) or the correlation clustering \cite{bansal2004correlation}
objective $ f_A(Y) = -\Vert A-Y \Vert_1 $ (here $ \Vert X \Vert_1 = \sum_{ij}
\vert X\vert $ is called the matrix $ \ell_1 $ norm of $ X $), and $d$ can be the
estrangement \cite{estrangement} $ d_{A^{t+1},A^t}(Y^{t+1}, Y^t) = \sum_{ij} A^{t+1}_{ij}
A^t_{ij} \max\left(Y^t_{ij} - Y^{t+1}_{ij},0 \right)$.

For OTC detection, the difficulty lies in defining quality and distance
functions that can handle overlaps. We propose two novel metrics that are suitable to
this task. For the snapshot quality $ f $, we use the weighted $ \ell_1 $
distance between the cover matrix $ Y $ and the adjacency matrix $ A$: 
\begin{equation*}
 f_A(Y)  =  - \sum_{i,j} \vert C_{ij}(Y_{ij} - A_{ij}) \vert, 
\end{equation*} 
where $C_{ij}$ are some weights to be chosen. In this paper, we use 
the weights $ C_{ij} = \left\vert A_{ij}-\frac{k_i k_j}{2M}\right\vert$, where 
$ k_i $ and $ M $ are defined in the last paragraph. This qualify function generalizes the
correlation clustering objective \cite{bansal2004correlation} and is closely
related to the widely-used modularity quality function \cite{Newman2006} when
there is no overlap. In particular, it penalizes three types of
``errors'' (recall $ Y_{ij} $ is the number of clusters including both $
i $ and $ j $, or the assigned similarity level between $ i,j $): 
\begin{itemize}
\item $ A_{ij}=1$ and $Y_{ij}=0 $: nodes $ i $ and $ j $ are connected but they are assigned to different clusters 
\item $ A_{ij}=0$ and $Y_{ij}\ge 1$: nodes $ i $ and $j $ are disconnected but they share at least one clusters, i.e., the assigned similarity level is positive while the observed one is zero.
\item $ A_{ij}=1$ and $Y_{ij} > 1 $: nodes $ i $ and $ j $ are connected but they share more than one clusters, i.e., the assigned similarity level is higher than the observed one. 
\end{itemize}
Note that in the last two cases, the more clusters $ i $ and $ j $
share, the higher the cost is. This prevents the algorithm from over-fitting by
generating many small clusters with lots of overlap.

For the temporal distance $ d $, we use:
\begin{equation*} 
d_{A^{t+1},A^t}(Y^{t+1},Y^t)  =  \sum_{i,j} A^{t+1}_{ij} A^{t}_{ij} \vert
Y^{t+1}_{ij} - Y^t_{ij} \vert.  
\end{equation*} 
In other words, we measure the change in the assigned similarity level between node $
i $ and $ j $ (i.e., the number of clusters that include both nodes), but only when
there is an edge between $ i $ and $ j $ in both snapshots $ t $ and $ t + 1$. For non-overlapping
clusters, this reduces to the number of persisting edges that change ``state''
from intra-cluster to inter-cluster and vice-versa. Our measure is a modification and 
generalization of the estrangement measure in~\cite{estrangement} to overlapping clusters.

\subsection{Convex Relaxation}

The optimization problem \eqref{eq:opt} is combinatorial due to the constraint
``$Y^{t}$ represents a cover''. Exhaustive search is impossible because there are 
exponentially many possible covers. One option is to use greedy
local search, which a popular choice for optimizing modularity and other clustering
objectives, but it only converges to local minimums and provides no guarantees. 

In this paper, we use convex optimization. There are two advantages
of this approach: 1) it leads to an optimization problem that is efficiently
solvable and guaranteed to converge to the global optimum, and 2) it is possible
to obtain {\it a priori} characterization of the optimal solution (see
Section~\ref{sec:theory}), which provides interesting insights into the problem. To this end,
we relax the cover constraint and solve the following optimization problem:
\begin{eqnarray}
\max &  & \sum_{t=1}^{T} f_{A^t}(Y^{t})\label{eq:cvx_opt}\\
\textrm{s.t.} &  & \sum_{t=1}^{T-1} d_{A^{t+1,A^t}}(Y^{t+1},Y^{t})\le\delta,\nonumber \\
 &  & \left\Vert Y^{t}\right\Vert _{*}\le B, t=1,\ldots,T;\nonumber 
\end{eqnarray}
here $\left\Vert Y^{t}\right\Vert _{*}$ is the so-called trace norm, the sum of
singular values of $Y^{t}$. It is known that the trace norm constraint $
\left\Vert Y \right\Vert _{*}\le B $ is a convex relaxation of the original
cover constraint \cite{chen2012overlap}. We briefly
explain the reason here. Recall that a cover matrix admits the factorization $
Y=UU^\top $, so a cover $ Y $ is positive semidefinite and satisfies
\begin{equation}
 \Vert Y \Vert_* = \sum_{i} Y_{ii} = \sum_i \#(\mbox{clusters that include node } i).
\label{eq:trace_norm}
\end{equation} 
In particular, the right hand side in \eqref{eq:trace_norm} equals $ n$ if $ Y $ represents a partition.
Therefore, as long as $ B $ is no smaller than the right hand side in
\eqref{eq:trace_norm}, then a cover matrix $ Y $ also satisfies the new
constraint, which is thus a relaxation. Although the right hand side in
\eqref{eq:trace_norm} is unknown a priori, in practice we find that choosing $ B
$ to be suitably large, such as $10n$ as is done in our experiment section,
works well. Moreover, the constraint $ \left\Vert Y \right\Vert _{*}\le B $
effectively imposes an upper bound on the amount of overlap and prevents the
algorithm from producing a large number of clusters, which is desirable on its
own right.

Trace norm is known to be a good relaxation for partition matrices both in
theory and in practice \cite{ames2011clique,chen2012sparseclustering}. All partition
matrices with a small number of partitions (which is the case of interest) are
low-rank, and trace norm is the tightest convex relaxation of low-rank matrices
in a formal sense \cite{recht2010guaranteed}. Moreover, trace norm utilizes the
graph eigen-spectrum which has long been known to reveal hidden clustering structure and is the basis of the highly successful spectral clustering methods. This advantage of trace norm carries over to overlapping clusters \cite{chen2012overlap}.
With this relaxation and our choice of $ f $ and $ d $,
\eqref{eq:cvx_opt} becomes a convex program and can be solved in polynomial time
using general-purpose convex optimization packages such as SDPT3. In Appendix \ref{sec:fast_algo}, we
describe a specialized gradient descent algorithm, which is even faster. 

\subsection{Post-processing}
\label{sec:postpro}

Ideally, the optimal solution $\hat{Y}^{t}$ would represent a cover, which could
be easily extracted from $\hat{Y}^{t}$ (e.g. by finding all maximal cliques); in
the next section we provide one sufficient condition for this to happen. In
practice, however, because of the relaxation, $\hat{Y}^{t}$ may not have the structure of a cover matrix. But it is empirically
observed that $ \hat{Y}^t $ is usually \emph{close} to a cover matrix; in
particular, the optimization can be viewed as a ``denoising'' procedure, which
filters out most (though not all) of the noise in the observation $ A^t $ and
makes the underlying clustering structure more clear. Therefore, a good
clustering is likely to be extracted from $ \hat{Y}^t $ via simple
post-processing. We describe one such procedure below.

Recall again that a cover matrix can be factorized as $ Y = U U^\top $, where $
U $ is an assignment matrix of non-negative entries, with $ U_{ik} =1$
indicating node $ i $ in cluster $ k $. Therefore, performing Non-negative
Matrix Factorization (NMF) \cite{seung2001nmf} on a cover $ Y $ gives the
corresponding clustering assignment. When the optimal solution $ \hat{Y} $ is
not an cover but close to be one, we expect that performing NMF $ \hat{Y} =
\hat{U}\hat{U}^\top$ would still produce an approximate assignment matrix $
\hat{U} $, which is then rounded to be an exact assignment matrix. In
particular, we declare node $ i $ to belong to cluster $ k $ at time $ t $ if $ \hat{U}^t_{ik}
\ge 0.5$.

\subsection{Remarks on Our Method}\label{sec:algorithm_remark}
\textbf{Mapping communities:}  Practical application sometimes requires the
communities at time $t $ to be mapped to those at $ t-1 $, in order to track the
evolution of communities. In the experiment section, we use the mapping method
in \cite{estrangement}, which still works when $ Y $ is a cover instead of a
partition. The method involves mapping those communities across consecutive snapshots that have
the maximal mutual Jaccard overlap between their constituent node-sets, and
generating new community identifiers only when needed. 

\textbf{Online algorithm:} In some cases it might be interesting to use an 
online version of the algorithm \eqref{eq:cvx_opt}: At each time $ t $ when a
new snapshot $ A^t $ becomes available, we obtain a new cover $ Y^t $ by solving
the following problem:
\begin{eqnarray}
\max_{Y^t} &  &  f_{A^t}(Y^{t})\label{eq:cvx_opt_online}\\
\textrm{s.t.} &  & d_{A^t,A^{t-1}}(Y^{t},Y^{t-1})\le \delta^t, 
\quad \left\Vert Y^{t}\right\Vert _{*}\le B,\nonumber 
\end{eqnarray}
where $Y^{t-1} $ is the solution from the last snapshot $ t-1 $ and is considered
fixed. Rigorously speaking, the solution to the online formulation is in general
different from that to the offline one. But we expect in practice the online
formulation will perform reasonably well, and various updating rules can be
adopted to choose the online upper bound $ \delta^t $. We do not delve into this in
this paper.

\textbf{Complexity and Scalability:} Using the fast gradient descent algorithm, the space 
and time complexities of our method both scale linearly with the problem size (the numbers of nodes, 
edges and snapshots); see Appendix \ref{sec:complexity} for details. With the online 
implementation suggested above, the dependence on the snapshot number can be further alleviated. 
Our method is therefore amenable to large datasets.

\section{Theoretical Analysis}

\label{sec:theory}

In this section we provide theoretical analysis on the performance of our
algorithm. In particular, our analysis shows that if the adjacency matrices
$A^{t}$ are generated from an underlying persistent partition according to a
generative model, then with high probability our method will recover the
underlying partition as long as $K=\Omega(\sqrt{n/m})$, where $K$ is the minimum
cluster size in the partition and $m$ is the number of snapshots it persists for. This highlights the benefit of temporal
clustering: a small cluster of size $ \sqrt{n/m} $ is likely to be undetectable
if each snapshot is considered individually (e.g., the cluster might not  
be connected in each single snapshot), but can be recovered by temporal
clustering if the cluster persists for $ m $ snapshots and all snapshots are
used. This result is quite revealing: traditional single-snapshot clustering
techniques can only find clusters that are large in size, but temporal
clustering is capable of detecting clusters that are small in size but large in
the time axis. Moreover, our theorem predicts a specific tradeoff between the
``spatial size'' $ K $ and the ``temporal size'' $ m $: with four times more
temporal snapshots, one can detect a cluster that is half as small
spatially. We believe this is the first such result in the literature.

We now present our theorem. We use a generative model which can be considered as
a multi-snapshot version of the classical and widely studied planted partition
model (a.k.a. stochastic block model) \cite{condon2001algorithms}.

\begin{definition}
[Multi-Snapshot Planted Partition Model]
Suppose $n$ nodes
are in $r$ disjoint clusters, each with size $K$, and this clustering
structure does not change over time (see remarks after the theorem). Let $ Y^* $ be the matrix that represents this clustering.
The adjacency matrices $A^{t},t=1,\ldots,m$
are generated as follows: if node $i$ and $j$ are in different clusters,
then there is an edge between them (i.e. $A_{ij}=1$) with
probability $q$, independent with all others; if they are in the same cluster, 
then $A_{ij}=1$ with probability $p$.
We assume $q<\frac{1}{2}<p$ are constants independent of $n$, $m$
and $K$.
\end{definition}
Since the underlying partition does not change,
we impose the constraint $\sum_{t=1}^{T-1} d_{A^{t+1},A^t} (Y^{t+1}, Y^{t})=0$, which is equivalent to $ Y^t = Y, \forall t $. Rewritten in an equivalent minimization form, our algorithm becomes 
\begin{eqnarray}
\min_{Y} &  & \sum_{t}\sum_{ij} C_{ij} \left|Y_{ij}-A_{ij}^{t}\right| \label{eq:no_jump_opt}\\
s.t. &  & \left\Vert Y\right\Vert _{_{*}}\le{n}.\nonumber
\end{eqnarray}
Note that under the multi-snapshot planted partition model, we have $ C_{ij} = \left\vert A_{ij}-\frac{k_i k_j}{2M}\right\vert \approx |A_{ij}-s|$, where $ s :=  p\frac{K}{n} + q\left(1-\frac{K}{n}\right) \in (q,p)$. The following theorem characterizes when (\ref{eq:no_jump_opt}) recovers true underlying partition matrix $Y^{*}$. 
\begin{theorem}
\label{thm:no_jump}Suppose $ C_{ij} = |A_{ij}-s| $. Under the multi-snapshot planted partition model,
if $K= \Omega( \sqrt{\frac{n}{m}} )$,
then $Y^{*}$ is the unique optimal solution to the convex optimization problem
(\ref{eq:no_jump_opt}) with probability converging to $1$ as $n\rightarrow\infty$.
\end{theorem}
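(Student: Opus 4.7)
The plan is to verify optimality and uniqueness of $Y^*$ by constructing an explicit dual certificate, following the template for trace-norm relaxations of planted cluster problems. Since problem~(\ref{eq:no_jump_opt}) is convex and the trace-norm constraint is active at $Y^*$ (indeed $\|Y^*\|_* = rK = n$), the KKT conditions say $Y^*$ is the \emph{unique} minimizer provided one can exhibit a subgradient $G$ of the objective at $Y^*$, a subgradient $W \in \partial\|Y^*\|_*$, and a multiplier $\lambda > 0$ satisfying $G + \lambda W = 0$, together with a strict bound on the off-tangent part of $W$.

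First I would describe both subdifferentials explicitly. Writing $Y^* = K\tilde{U}\tilde{U}^\top$ with $\tilde{U} \in \mathbb{R}^{n\times r}$ the column-normalized cluster-indicator matrix, the trace-norm subdifferential is $\{\tilde{U}\tilde{U}^\top + M : M\tilde{U} = 0,\ \tilde{U}^\top M = 0,\ \|M\|_{\mathrm{op}} \le 1\}$. The objective subdifferential has entries $G_{ij} = C_{ij}\sum_{t=1}^{m} g^t_{ij}$, where $g^t_{ij}$ is pinned to $\mathrm{sgn}(Y^*_{ij} - A^t_{ij})$ whenever $Y^*_{ij} \ne A^t_{ij}$ and is free in $[-1, 1]$ otherwise. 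Setting $P := \tilde{U}\tilde{U}^\top$ and $Q := I - P$, the standard identity $\mathcal{P}_{T^\perp}(X) = QXQ$ lets the KKT system be rewritten as the two conditions $PG = GP = -\lambda P$ and $\|QGQ\|_{\mathrm{op}} < \lambda$.

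Next I would build the certificate. Split $G = G^{\mathrm{det}} + G^{\mathrm{free}}$ according to pinned versus free entries, and pick the free signs $e^t_{ij} \in [-1, 1]$ so that the tangent-space condition $PG = -\lambda P$ holds exactly. This amounts to solving, for each cluster $C$ and each $j \in C$, a linear equation that forces the within-cluster column sum of $G$ to equal $-\lambda$; choosing $\lambda = \Theta(mK)$ keeps the prescribed $e^t_{ij}$ strictly in $(-1, 1)$ with high probability. With this construction in place, $QGQ$ becomes a sum over $(t, i, j)$ of independent, mean-zero, bounded random matrices driven purely by the Bernoulli fluctuations of $A^t_{ij}$ around their means.

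The main obstacle is then the spectral bound $\|QGQ\|_{\mathrm{op}} < \lambda$. A matrix Bernstein inequality (or a more direct trace-moment computation as in the planted-partition literature) is expected to yield $\|QGQ\|_{\mathrm{op}} = O(\sqrt{nm\,\mathrm{polylog}\, n})$, while the constructed $\lambda$ is of order $mK$. The required strict inequality therefore holds whenever $mK$ dominates $\sqrt{nm}$, i.e.\ $K = \Omega(\sqrt{n/m})$, which is exactly the hypothesis of the theorem; shaving logarithmic slack may require a more refined spectral argument. A secondary subtlety is that the weights $C_{ij}$ depend on the random adjacency itself, so one either conditions on $C$ and verifies that its entries concentrate at $\Theta(1)$ values, or absorbs the weights into the random matrices and tracks their effect on the variance proxy.
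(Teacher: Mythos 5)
Your overall architecture matches the paper's: both proofs hinge on building a dual certificate for the trace-norm/weighted-$\ell_1$ program out of the Bernoulli fluctuations of the $A^t$, bounding its off-tangent spectral norm by a quantity of order $\sqrt{mn}$ (after normalization), and extracting the threshold $K=\Omega(\sqrt{n/m})$ from the requirement that this bound sit below the available multiplier, which is capped at $\Theta(mK)$ by the in-cluster sign budget. (The paper packages this slightly differently: it first reduces the constrained problem to a Lagrangian penalization via a short lemma with $\lambda=\sqrt{1/(16mn)}$ on the $\ell_1$ term --- the reciprocal of your multiplier at the threshold --- and then runs a direct primal perturbation argument showing every $Y^*+\Delta$, $\Delta\neq 0$, strictly increases the objective, rather than invoking KKT conditions verbatim.)

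There is, however, one step in your plan that does not work as written. You propose to choose the free subgradient entries $e^t_{ij}$ by \emph{exactly} solving the linear equations $PG=GP=-\lambda P$, and then to treat $QGQ$ as a sum of independent, mean-zero random matrices amenable to matrix Bernstein. These two requirements are incompatible: once the free entries are solved for, they become deterministic functions of the pinned (random) entries sharing their row/column blocks, so the entries of $G$ are no longer independent, and the symmetry constraint $e^t_{ij}=e^t_{ji}$ couples the per-column systems across clusters, so they cannot even be solved independently column by column. The paper avoids this by constructing an \emph{approximate} certificate: the free entries are set to fixed deterministic multiples (the $\tfrac{1-p}{p}$ and $\tfrac{q}{1-q}$ factors in $V^t$ and $Z^t$) chosen so that each entry of $W$ is mean zero and genuinely independent, at the price that $P_TW$ is no longer exactly zero; the residual $P_TW$ is then controlled entrywise, $\Vert P_TW\Vert_\infty\le\frac{1}{4}m\lambda\min\{s,1-s\}$, via Bernstein plus a union bound, and absorbed into the perturbation inequality. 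Your plan needs either this approximate-certificate device or a golfing/leave-one-out correction to repair the independence claim; everything else (the subdifferential descriptions, the activeness of the trace-norm constraint, the $\Theta(mK)$ versus $O(\sqrt{mn})$ comparison, and the observation that the weights $C_{ij}\in\{s,1-s\}$ are effectively deterministic conditional on the edge indicators) is sound and consistent with the paper.
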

\iftechreport
The proof is given in Appendix \ref{sec:proof_of_theorem}.
\else
\begin{proof}
  The proof is in~\cite{bbntechreport} due to space constraints.  
\end{proof}
\fi

\textbf{Remark on Theorem \ref{thm:no_jump}:} Although the multi-snapshot
planted partition model assumes that the underlying clustering structure does
not change, and that the clusters do not overlap, we conjecture similar
theoretical guarantees can be obtained with these restrictions removed. 
In particular, we expect that our algorithm can detect clusters of size $\Theta(\sqrt{\frac{n}{m}})$ 
even if the underlying structure changes, provided that between consecutive changes there 
are at least $m$ snapshots. This
conjecture is supported by the experimental results in section \ref{sec:expt}.

\section{Experimental Results}
\label{sec:expt}

We apply our method to two synthetic datasets and three real-world datasets. Our
synthetic networks are random graphs generated according to an underlying
community structure evolution. Each snapshot is an instantiation of a random
graph generated by connecting each pair
of nodes sharing at least one community with probability 0.5, and with
probability 0.2 otherwise (including the case where one or both of the nodes are
not in any community). Note that we allow some nodes in some snapshots to not
belong to any community, as is often true in real scenarios. Also, note that
nodes sharing more than one communities are not connected with a higher
probability. This makes overlapping communities harder to detect and is a better
test of the detection methods. 

Using this prescription, we generate two synthetic time-varying networks to
validate our method and demonstrate its efficacy. We compare the results
obtained with and without overlap allowed, and with and without the smoothness
constraint. A popular temporal clustering method using multi-slice modularity \cite{Mucha2010} is also considered. 

\iftechreport
The four real network datasets considered in this section include MANET, international trade, AS links, and the MIT Reality Mining Data.
\else
The three real network datasets considered in this section include MANET, international trade, and AS links. 
We provide additional results on another real dataset (the MIT Reality Mining Data) in the technical report \cite{bbntechreport}.
\fi

\vikas{Some other things we can do here:
  \begin{itemize}
  \item Compare running times of SDP vs factorization
  \end{itemize}
}
\subsection{Synthetic Random Networks I}

In the first synthetic experiment, we demonstrate the advantage of
considering the temporal aspect and allowing overlap, and that there is
clustering structure that can be detected only if we consider both. We generate
the network snapshots as follows. Suppose there are 120 nodes and 5 underlying
communities. Community 1 is a small 15-node group including nodes 0 through 14.
Community 2 and 3, both of size 38, consist of nodes 15-52 and 47-85,
respectively, and overlap at 5 nodes (47-52). Communities 4 and 5, both of smaller
size 20, include node 85-104 and 100-119, respectively, and overlap at 5 nodes
(100-104). 

Since community 1 is small, in light of Theorem \ref{thm:no_jump}, we expect
that single-snapshot methods are unable to detect it due to noise/randomness in
the network, but temporal methods will find them. Community 2 and 3 are large
but overlap with each other, so only methods that allow overlap
would detect them, even if the snapshots are considered individually. Finally,
communities 4 and 5 are small and overlapping, and are thus expected to be
discoverable only when both the temporal and overlap aspects are considered.
This is indeed the case in our experiments. The results are shown in Fig
\ref{fig:4comb_all} to \ref{fig:4comb_none}. Visualizing overlapping temporal
communities is not a trivial task. Here we extend the approach used in Fig. \ref{fig:schematic} to allow overlaps, which is explained in the caption of Fig
\ref{fig:4comb_all}. Fig \ref{fig:4comb_all} shows the results of our method,
which nicely detects all the underlying structure. Fig \ref{fig:4comb_overlap}
shows the result of our method but with $ \delta $ set to infinity, so there is no temporal smoothness constraint and snapshots are
considered independently. In this case, communities 1, 4 and 5 are not
recovered completely. Fig. \ref{fig:4comb_temp} shows the result when overlap is not
allowed, i.e., we impose the constraint $ Y^t_{ij} \le 1, \forall t,i,j. $. All
overlapping structure is clearly lost. Fig \ref{fig:4comb_none} shows that
result when $ \delta=\infty $ and overlap is not allowed; one can see a further
degradation of performance.

We also measure the performance of the above four methods by computing the
distance of the recovered community structure from the ground truth. We use the distance
$ \sum_{t=1}^T \Vert Y^{*t} - \hat{Y}^t \Vert_1 $, where $ Y^{*t} $ denotes the
cover matrix of the ground truth, and $ \hat{Y}^t $ the one found by a clustering 
algorithm. The results are given in the second row of Table \ref{tab:distance}.
The error is an order of magnitude smaller when both the overlapping and
temporal aspects are considered. 

\textbf{Comparison with existing schemes}:
Although there has been much work on  community
detection algorithms, almost none allows simultaneously discovering overlapping
and temporal communities.  Thus, we can only compare against some representative
non-overlapping temporal community detection algorithms.  We compare against
the widely cited temporal community detection scheme presented in \cite{Mucha2010}.
This method involves two parameters,
the resolution $ \gamma $ for the modularity function 
and the inter-slice coupling strength $ \omega $. Since the ground truth clustering structure does not change over time,
a large $ \omega $ is used to force a static output. We then search over different values of $ \gamma $ and
use the one that gives the smallest error.
\iftechreport The recovered community structure is  shown in Fig. \ref{fig:4comb_mucha}. \fi
We find that this method cannot identify the overlap structure (as expected), and fails to recover the 
non-overlapping portions of small communities (community 4 and 5\iftechreport \else; see \cite{bbntechreport} for details\fi). The recovery error, given in the last 
column of Table \ref{tab:distance}, is also high. 

\begin{figure}[!t]
\centering
\includegraphics[width=1\linewidth]{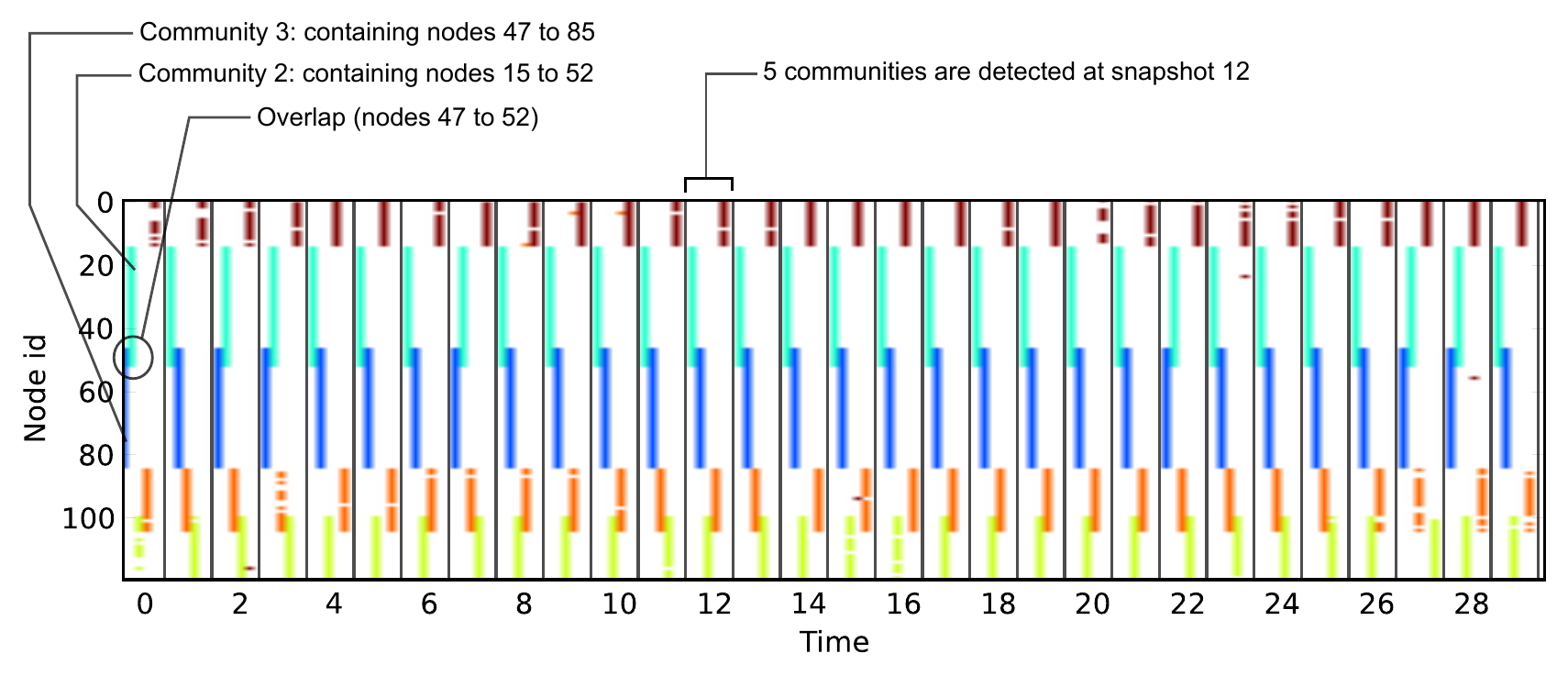}
\caption{Synthetic experiment I: Overlapping Temporal Community structure detected by our algorithm. In the figure, the strips between the consecutive vertical
black lines shows the community assignment for one snapshot. For each snapshot, there are a
number of colored vertical strips, each representing a community which contains
nodes with corresponding labels on the $ Y$ axis. For example, the leftmost blue
strip represents a community (Community 3) at time 0 containing nodes 47 to 85, and the cyan
strip to its right is a community (Community 2) containing nodes 15 to 52; nodes 47 to 52 are in
both communities. Across snapshots, communities with the same color are those
that are mapped to each other using the mapping method described in
Section~\ref{sec:postpro}.
This figure shows that our method faithfully recovers the underlying 5-community
structure that is used to generate the network.}
\label{fig:4comb_all}
\end{figure}

\begin{figure}[!t]
\centering
\includegraphics[width=1\linewidth,height=0.35\linewidth]{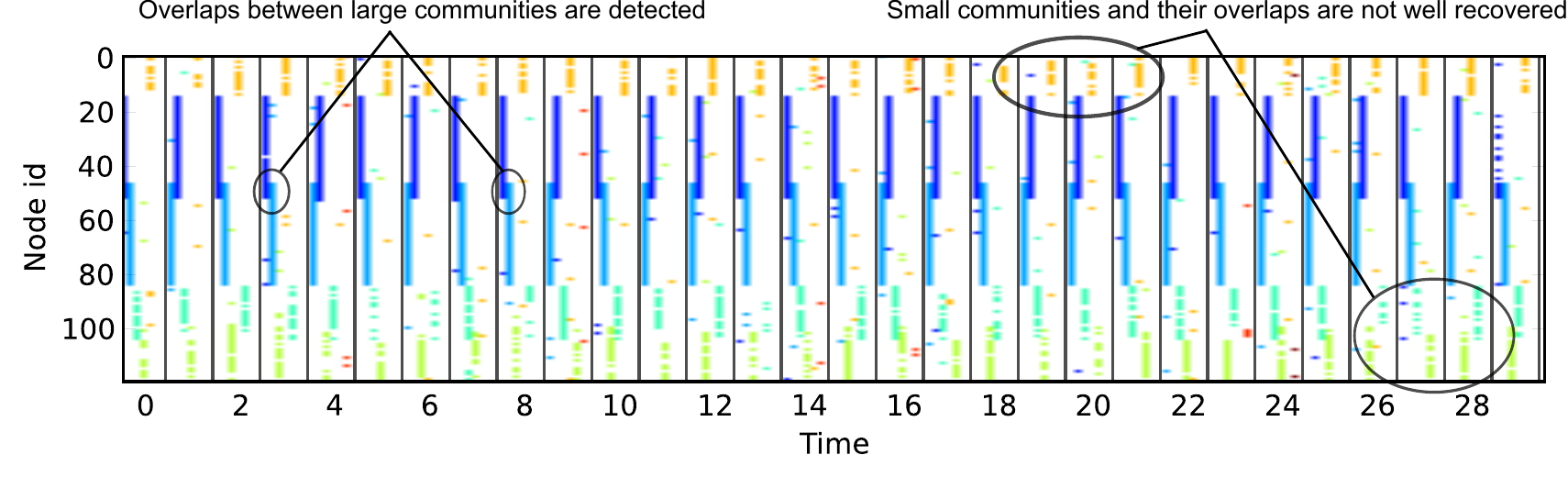}
\caption{Synthetic experiment I:Clustering result when overlaps are allowed but without temporal smoothness constraint. Small communities and their overlaps are not well recovered.}
\label{fig:4comb_overlap}
\end{figure}

\begin{figure}[!t]
\centering
\includegraphics[width=1\linewidth,height=0.35\linewidth]{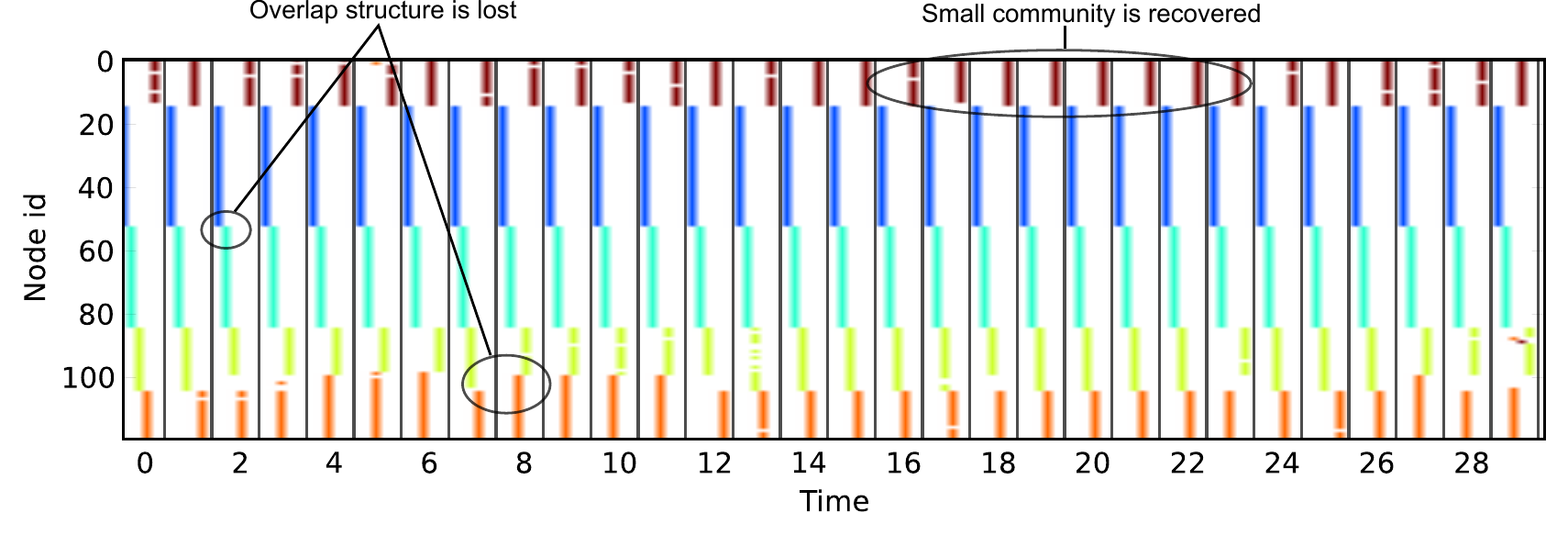}
\caption{Synthetic experiment I: Clustering result with temporal smoothness constraint but not allowing overlaps. The overlap structure is lost.}
\label{fig:4comb_temp}
\end{figure}

\begin{figure}[!t]
\centering
\includegraphics[width=1\linewidth,height=0.35\linewidth]{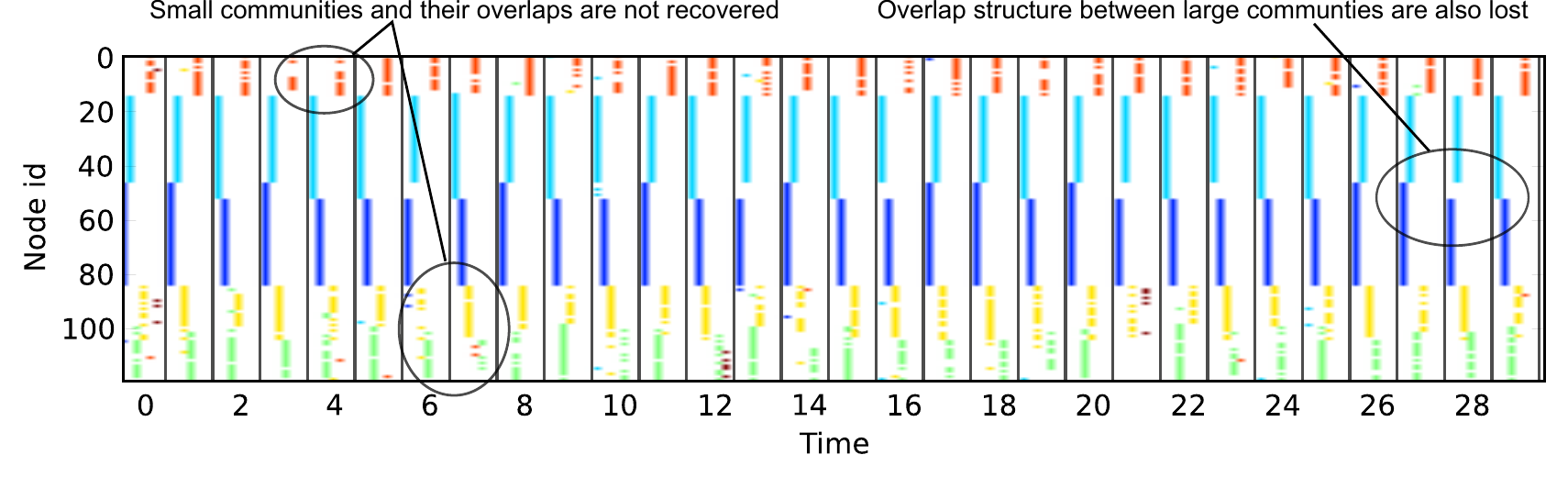}
\caption{Synthetic experiment I: Clustering result without temporal smoothness constraint and no overlaps. Small communities are not well recovered and the overlap structure is lost.}
\label{fig:4comb_none}
\end{figure}

\iftechreport
\begin{figure}[!t]
\centering
\includegraphics[width=1\linewidth,height=0.3\linewidth]{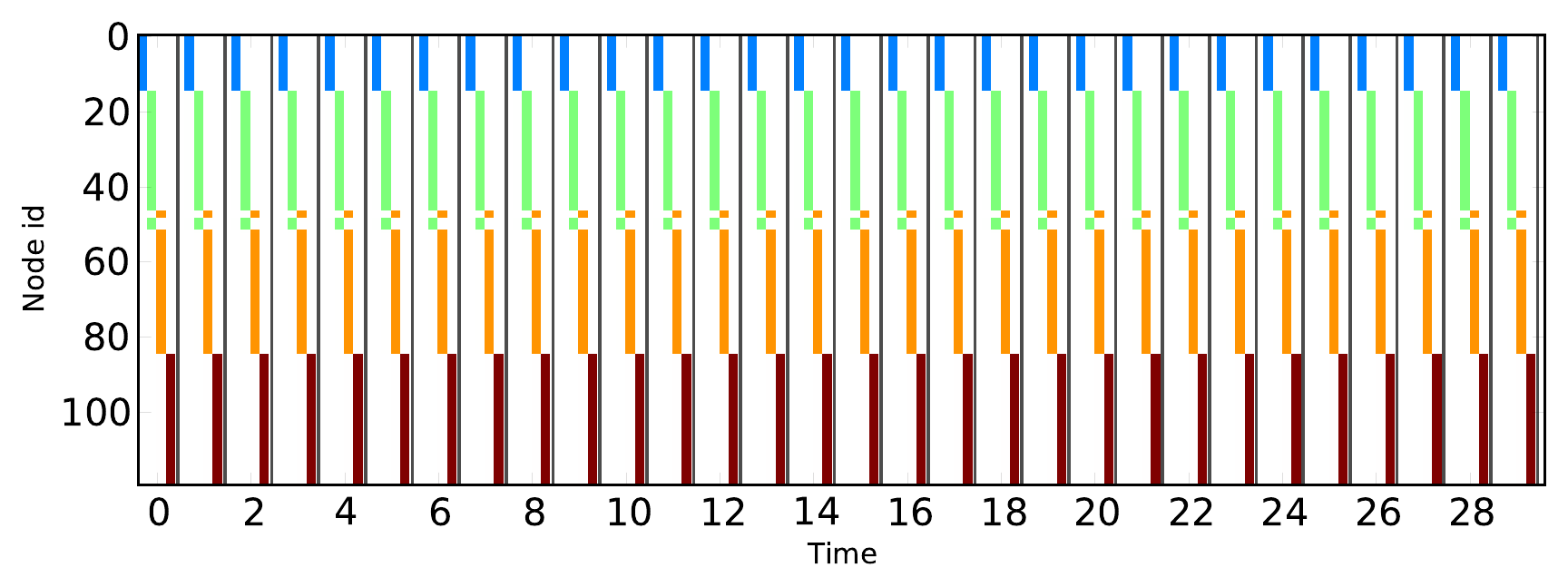}
\caption{Synthetic experiment I: Clustering using the multi-slice modularity method in \cite{Mucha2010}. The two small communities 4 and 5 are incorrectly identified as one cluster, and the overlap structure is lost.}
\label{fig:4comb_mucha}
\end{figure}
\fi

\begin{table}[!t]
\caption{Distance from ground truth for synthetic experiments. }
\label{tab:distance}
\centering
\begin{tabular}{|p{1.0cm}|p{1.25cm}|p{0.99cm}|p{1.2cm}|p{0.8cm}|p{0.8cm}|}
\hline
& Overlap+ temporal & Overlap only & Temporal only & None & Ref. \cite{Mucha2010}\\
\hline
\hline
Expt I & 3133 & 27203 & 20646 & 32789 & 27390\\
\hline
Expt II & 1646 & 14843 & 8318 & 12534 & 7450\\
\hline
\end{tabular}
\vspace{-0.2in}
\end{table}

\subsection{Synthetic Random Networks II}

The second synthetic experiment demonstrates the ability of our method to detect
and track time-varying cluster structure, including the overlap, merger,
emergence,
splitting, growth, and shrinking of communities. We describe how we generate the
snapshots. The network  has 100 nodes, and the underlying clustering structure
has four phases, each with 10 snapshots: 

\begin{itemize}
\item Phase I: There are two communities: community 1 includes nodes 0 to 39, and
community 2 includes nodes 40 to 79. The structure does not change during this
phase.
\item Phase II: Community 1 remains the same, but community 2, now with
nodes 30 to 69, overlaps with community 1.  The structure does not change during
this phase.
\item Phase III: Communities 1 and 2 merge into a large community A, which
consists of nodes 0 to 69. This community then gradually shrinks: at each time
there is one node leaving, and at the end of this phase, community A has nodes 0
to 60. On the other hand, there is a new community B, including nodes 75-99,
emerges at the beginning of this phase and remains unchanged throughout this
phase.
\item Phase IV: Community A splits into two smaller ones
consisting of nodes 0-19 and 20-59, respectively. Community B grows by absorbing
nodes 60-74, and thus has nodes 60-100. The structure does not change
in this phase.
\end{itemize}

As can be seen in Fig.~\ref{fig:jump}, our method performs quite
well in recovering the underlying evolving structure. This complements our
theoretical results in section~\ref{sec:theory}, and shows that our method can
handle overlaps and detect jumps in the structure. We compare our method with
those that do not allow overlap, or ignore the temporal aspect; see Table~\ref{tab:distance}.
Our method again outperforms other methods by a
large margin. 

\textbf{Comparison:} 
\iftechreport We also compare with the algorithm in~\cite{Mucha2010}. 
For this algorithm, we search for the best parameters $ (\gamma,\omega)$ that give the 
smallest error. The recovered community structure is shown in Fig.~\ref{fig:jump_mucha},
and the error is given in the last column of Table~\ref{tab:distance}. One observes that 
it cannot detect overlaps. Without considering overlaps, our method is competitive 
with a state-of-the-art algorithm that specializes for temporal clustering.
\else
The result using multi-slice modularity~\cite{Mucha2010} is 
provided in Table~\ref{tab:distance}; see~\cite{bbntechreport} for a plot of the community structure it found. We find that 
our method is competitive with this method in recovering temporal dynamics, and is capable of identifying
additional overlap structures.
\fi 

\begin{figure}[!t]
\centering
\includegraphics[width=1\linewidth,height=0.35\linewidth]{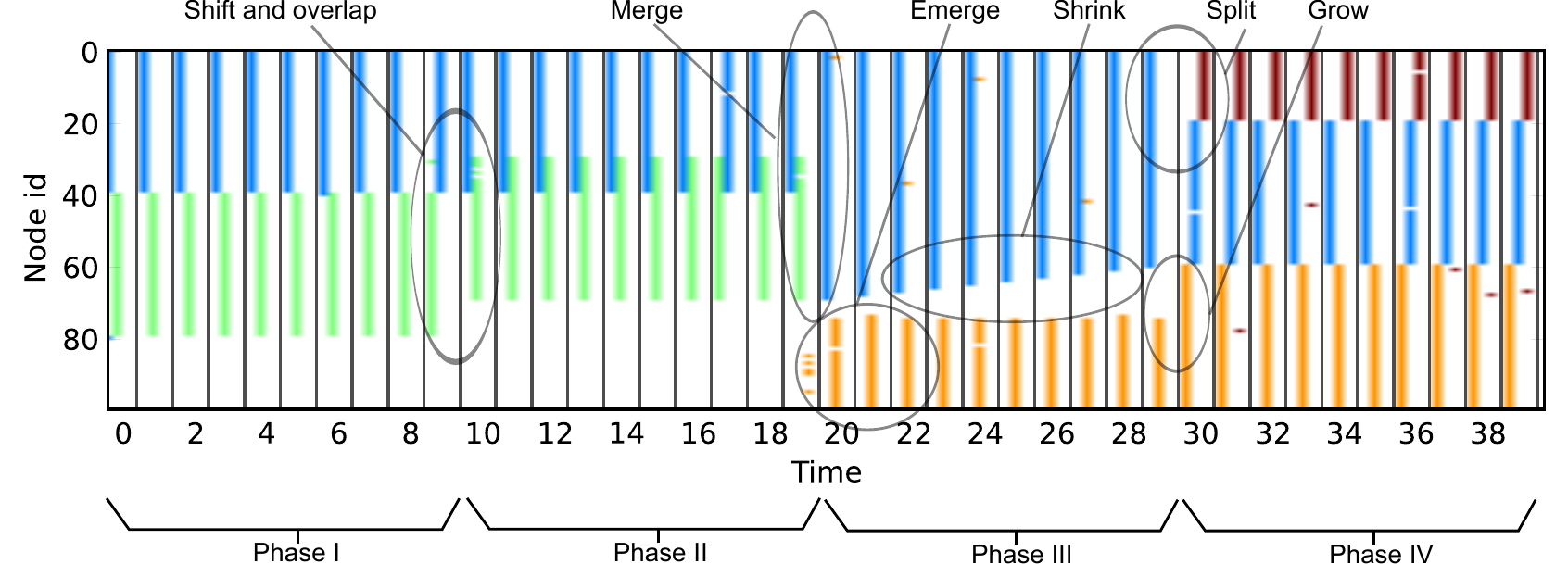}
\caption{Clustering result of our method for synthetic experiment II. Our method is able to detect the merge, emerge, shrink, split, and growth of communities, as well as their overlaps.}
\label{fig:jump}
\end{figure}

\iftechreport
\begin{figure}[!t]
\centering
\includegraphics[width=1\linewidth,height=0.3\linewidth]{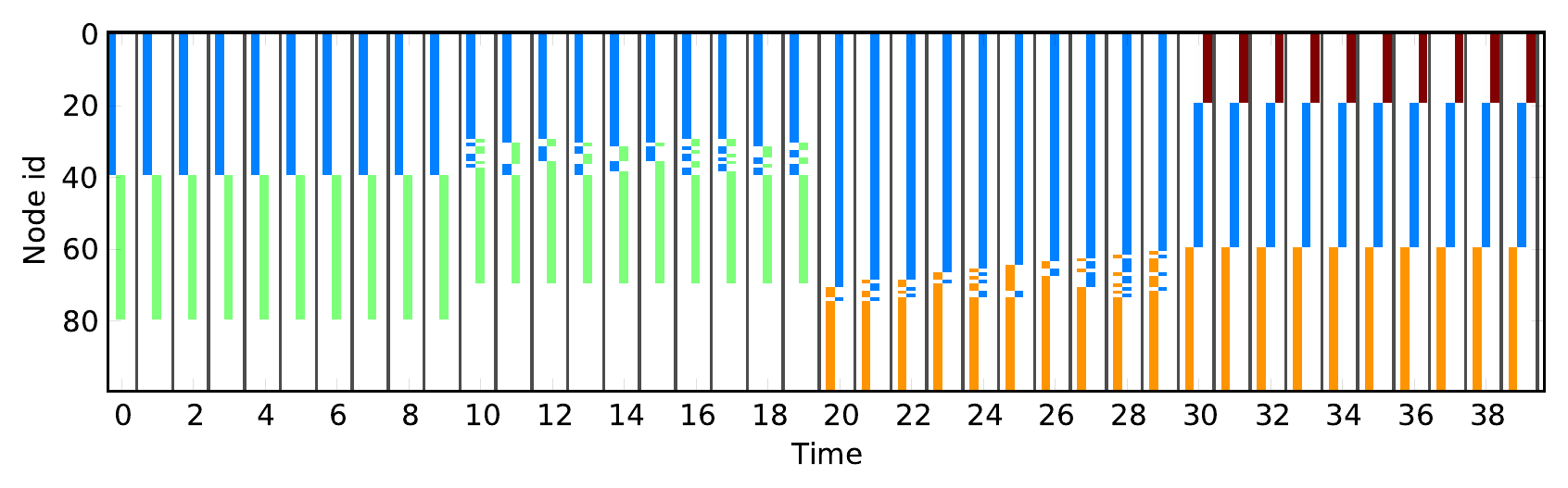}
\caption{Clustering result of the multi-slice modularity method in~\cite{Mucha2010} for synthetic experiment II.}
\label{fig:jump_mucha}
\end{figure}
\fi

\subsection{Real MANET Scenario}

We now present results on a real wireless network with mobility. The data is
based on the mobility trace from an experiment scenario in New Jersey as described
in~\cite{lakehursttrace}. We use a 40 node version of the scenario where the
nodes are organized into three teams.  The teams
move from an initial point to a target point using two primary routes over a
three-hour period. The scenario is divided into several phases, each associated with a
rendezvous point. 
During each phase the teams
move from one rendezvous point to the next and pause before moving on.  There
are also six \textit{leader} nodes which have high range radios and are mostly
in range of each other.

The input to our algorithm is 711 network snapshots formed by the wireless connection between the nodes. The physical locations of the nodes, as well as the underlying team structure, is unknown to the algorithm. The community structure found by our algorithm is shown in Fig.~\ref{fig:manet}. We find that the leader nodes form a small yet persistent community (shown in orange in Fig.~\ref{fig:manet}), which can only be detected by our clustering
method. We also find that the overlapping temporal community structure is
basically invariant for each phase of scenario even when the topology as well as
the instantaneous community structure without overlap is changing. Thus, we show that in this
case the overlapping temporal community structure detected by our method reveals
a structural pattern that remains invariant even with a fair bit of mobility.


\begin{figure}[!t]
\centering
\includegraphics[width=0.95\linewidth,height=1\linewidth]{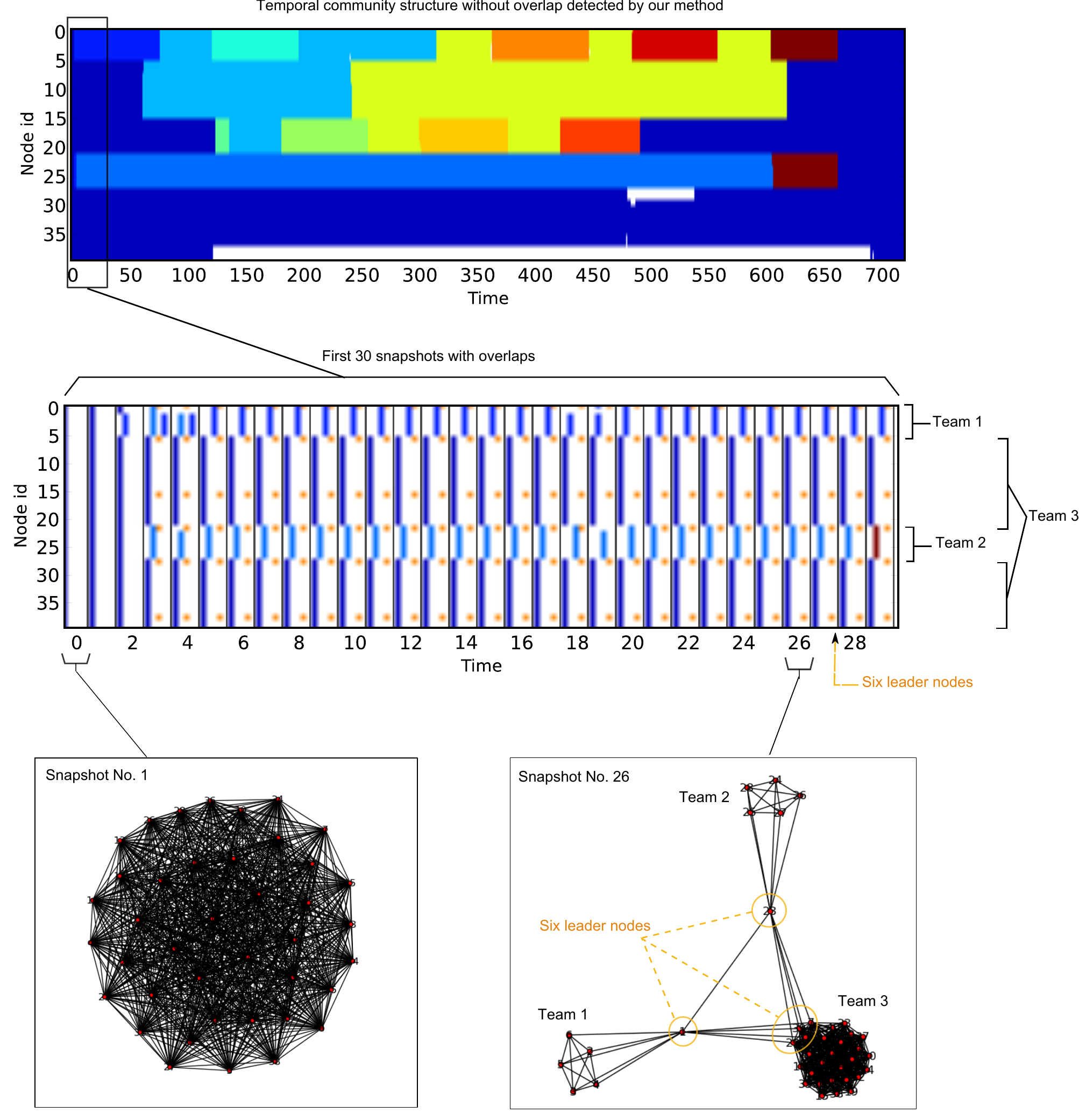}
\caption{Clustering results for MANET data. Top panel: community structure found by our method for
all 711 snapshots, where the colors indicate the community membership of each node at each time; for each node, only the largest cluster it belongs to is shown;
overlaps are not displayed. Middle panel: overlapping community structure for
the first 30 snapshots; 3 teams and the six leader nodes are identified by our
method; the six leader nodes form a small yet persistent community that overlaps
with the other communities; this community can not be detected if overlap is not
allowed (compare with Fig.~\ref{fig:manet_nonoverlap}). Bottom panel: the
observed network structure at two snapshots; at snapshot No.1, all nodes are
densely connected with each other and forms a single community; at snapshot No. 26,
there are three communities corresponding to the three teams; in addition, the
six leader nodes form a community of its own, which is not obvious from looking
at a single snapshot of the network but yet our method is able to detect it.}
\label{fig:manet}
\end{figure}

\begin{figure}[!t]
\centering
\includegraphics[width=0.9\linewidth,height=0.35\linewidth]{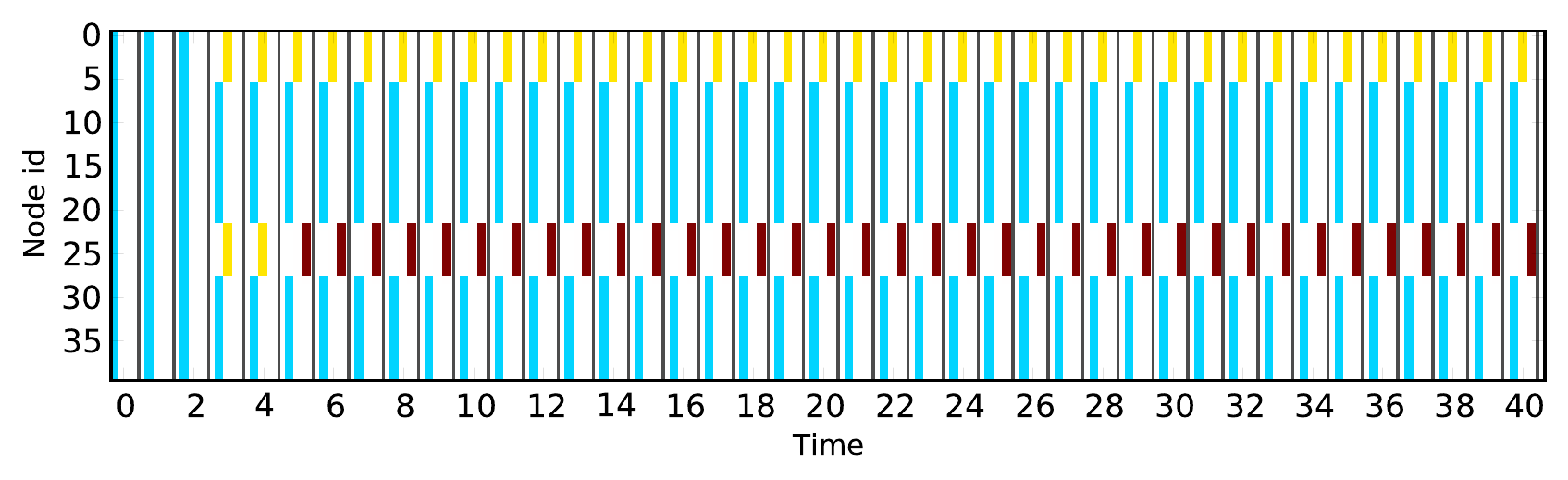}
\caption{Clustering results without overlaps for MANET data for the first 30 snapshots. The community of leader nodes is not detected. }
\label{fig:manet_nonoverlap}
\end{figure}

\iftechreport
\subsection{MIT Reality-mining}

We apply our method to a human-human contact network in the Reality-mining
project~\cite{Eagle}. The results are shown in Fig.~\ref{fig:mit}. Two predominant groups
can be seen, one corresponding to the staff at the MIT Media Lab, and the other corresponding
to the students at the MIT Sloan School of Business. We also observe a discontinuity of the Sloan School community around New Year's break.

\begin{figure}[!t]
\centering
\includegraphics[height=0.5\linewidth]{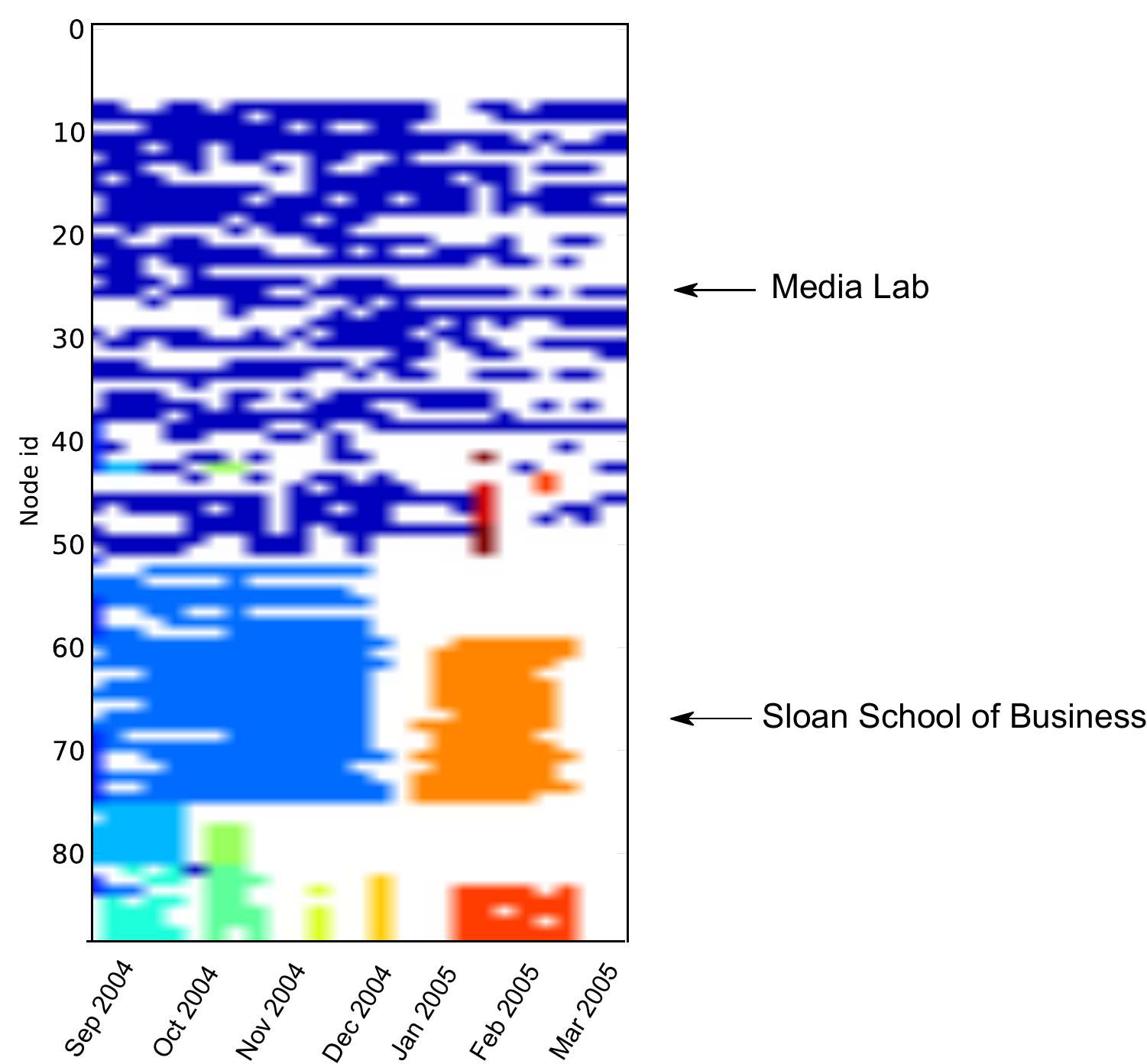}
\caption{MIT Data. No significant overlap is observed, so we only show non-overlapping temporal
community structure found. Two predominant groups can be seen, one corresponding to the staff 
and students at the MIT Media Lab, and the other corresponding to the students at the MIT Sloan 
School of Business.}
\label{fig:mit}
\end{figure}

\fi

\subsection{International Trade Network}

Our next real dataset consists of annual trade volumes between pairs of
countries during 1870--2006~\cite{COWTrade}. We create an
unweighted network each year by placing an edge between two countries if the
trade volume between them exceeds $ 0.1\% $ of the total trade volumes of both
countries; in other words, an edge is drawn if their trade is significant for
\emph{both} of them. This leads to a dynamic network with $ 197 $ nodes and $
137 $ snapshots, which is fed to our algorithm.

Fig. \ref{fig:trade1} shows the post-World War II (1950--2006) community
structure found by our algorithm, where the overlaps are not displayed (for each
node, only the largest cluster it belongs to is shown). Five prominent trade
communities can be immediately identified: Latin-American, US-Euro-Asian,
Ex-USSR Block, West African, and Afro-Asian. One also observes the evolution of
the communities, including the formation of the West African block in 1960
(``the Year of Africa'') due to decolonization, the emergence of the Ex-USSR
block after 1991, as well as Colombia and Venezuela joining the US-Euro-Asian
Block in the 1970s.

More information can be obtained by examining the overlap structure\iftechreport \else(see~\cite{bbntechreport} for a plot)\fi. A number of
countries are associated with multiple communities. For example, US,
Mexico, Colombia and Brazil belong to both US-Euro-Asian and Latin
American blocks. France and Portugal are in the US-Euro-Asian block, but they
both interact with the West African block for a significant number of years.
Similarly, Ivory Coast, Ghana and Nigeria are mainly West African but also
associated with the US-Euro-Asian. Several Asian/Pacific countries,
including Saudi Arabia and Australia, have trade partners in both US-Euro-Asian
and Afro-Asian blocks.
 
\begin{figure}[!t]
\centering
\includegraphics[width=0.7\linewidth]{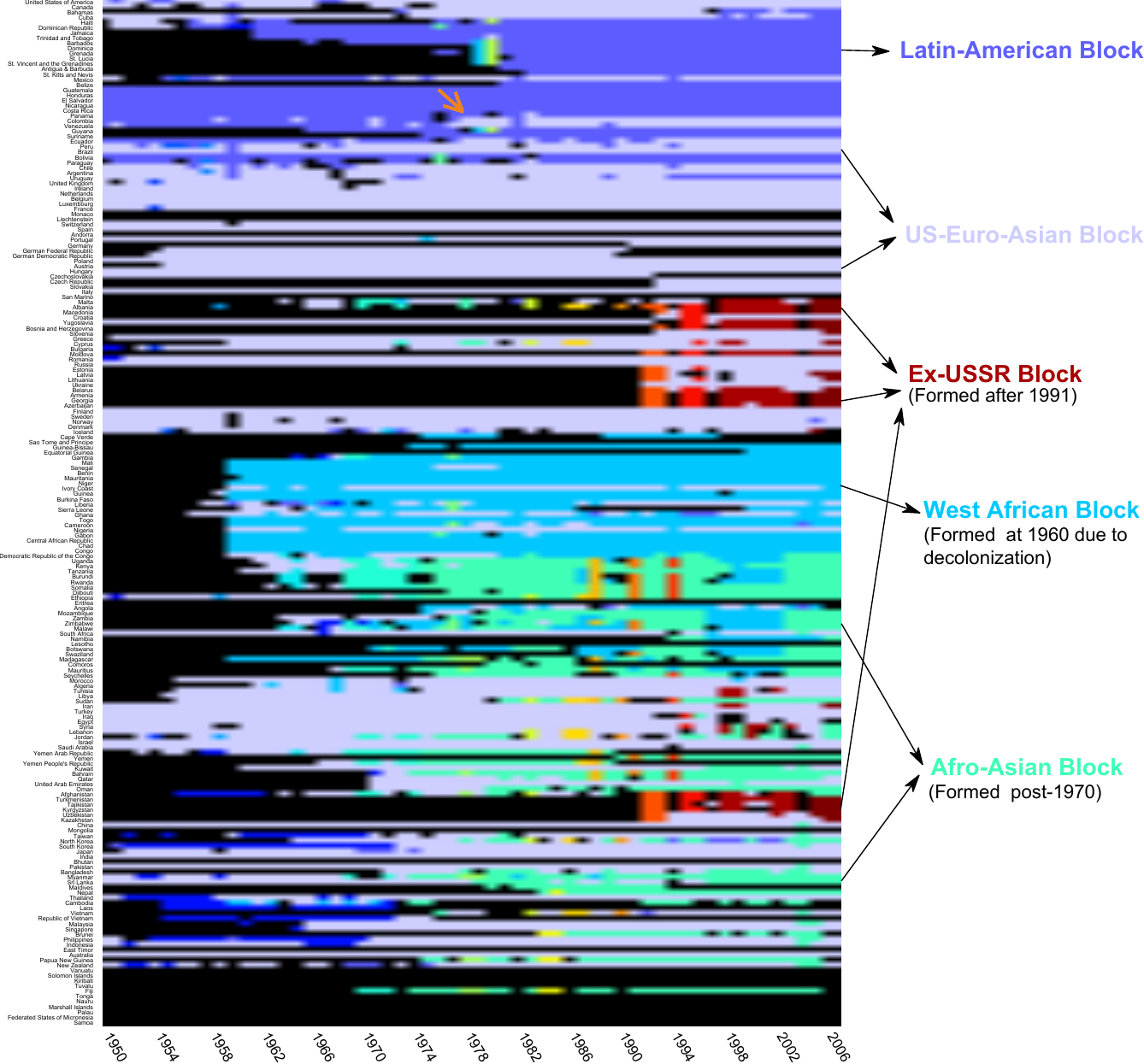}
\caption{Clustering result for the International Trade Network; only years
1950--2006 are shown (overlaps not shown for clarity). Five prominent trade
communities (blocks) can be seen. Moreover, one can observe the emergence of the
Ex-USSR block (after 1992), the West African Block (at 1960) and the Afro-Asian
Block (post-1970), as well as Colombia and Venezuela joining the US-Euro-Asian
block (1970s, orange arrow at the top-middle part of the plot). Note that black
is the background color and not a community.}
\label{fig:trade1}
\vspace{-0.1in}
\end{figure}

\iftechreport
\begin{figure}[!t]
\centering
\includegraphics[height=0.55\linewidth, angle=270]{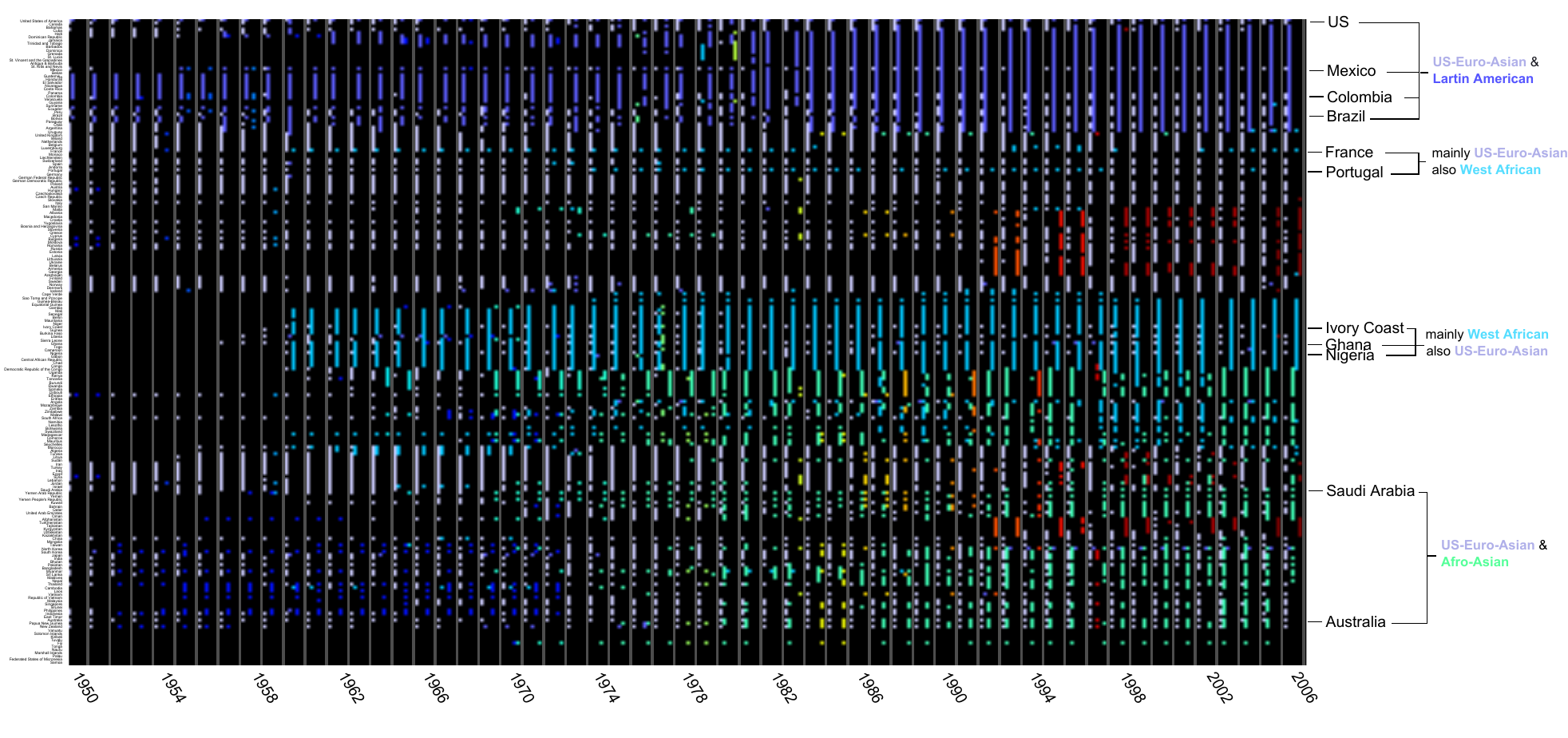}
\caption{Clustering with overlaps for the International Trade Network; only years 1950--2006 are shown. The figure indicates some the countries that are associated with multiple communities. Note that the color black is the background and does not indicate a community.}
\label{fig:trade2}
\end{figure}
\fi

\subsection{The Skitter AS Links Dataset}
Finally, to validate the performance of our algorithm on larger networks, we
analyze the Internet topology at the Autonomous System (AS) level
as collected by CAIDA~\cite{skitter}.
  We obtained quarterly snapshots of the data over an 8 year period starting in
  2000. The data has upto 28000 nodes in some snapshots. Many of
  those are edge nodes with a low degree and do not belong to a community.
 Thus we only consider nodes with degree larger than $ 9 $ in at least
 one snapshot. The final dataset consists of $ 2807 $ nodes and $ 32 $ snapshots.

Among these $ 2807 $ AS's, we identify 90 of them exhibit significant community
structures -- each of them are assigned to community in at least 10 snapshots.
The temporal community structure for these nodes is shown in
Fig.~\ref{fig:skitter1}; overlaps are not shown for clarity.
\iftechreport
Results with overlaps are shown in Fig.~\ref{fig:skitter2}.
\else
See ~\cite{bbntechreport} for the overlap structure.
\fi



We make some initial observations from Fig.~\ref{fig:skitter1}: (1) In upper portion we see a persistent
block with AS 1, 1239, 7018, 5511, 2914, 3561, 6461, 3549, 3356, 701, 209, and 6453. These seem to
be mainly in US.
(2) In the lower-right there is another smaller block with AS 8928, 286, 6695 and 13237,
which seem to be EU and DE.
(3) Between 2004 and 2005 there is some significant formation of new
communities. A similar phenomenon has been observed in ~\cite{edwards2012internet}.
Moreover, by looking at the overlap structure, we find that that 
all the nodes in the US block mentioned above consistently appear in multiple clusters. These
turn out to be  Tier 1 providers or large internet exchange points.



\begin{figure}[!t]
\centering
\includegraphics[height=0.64\linewidth]{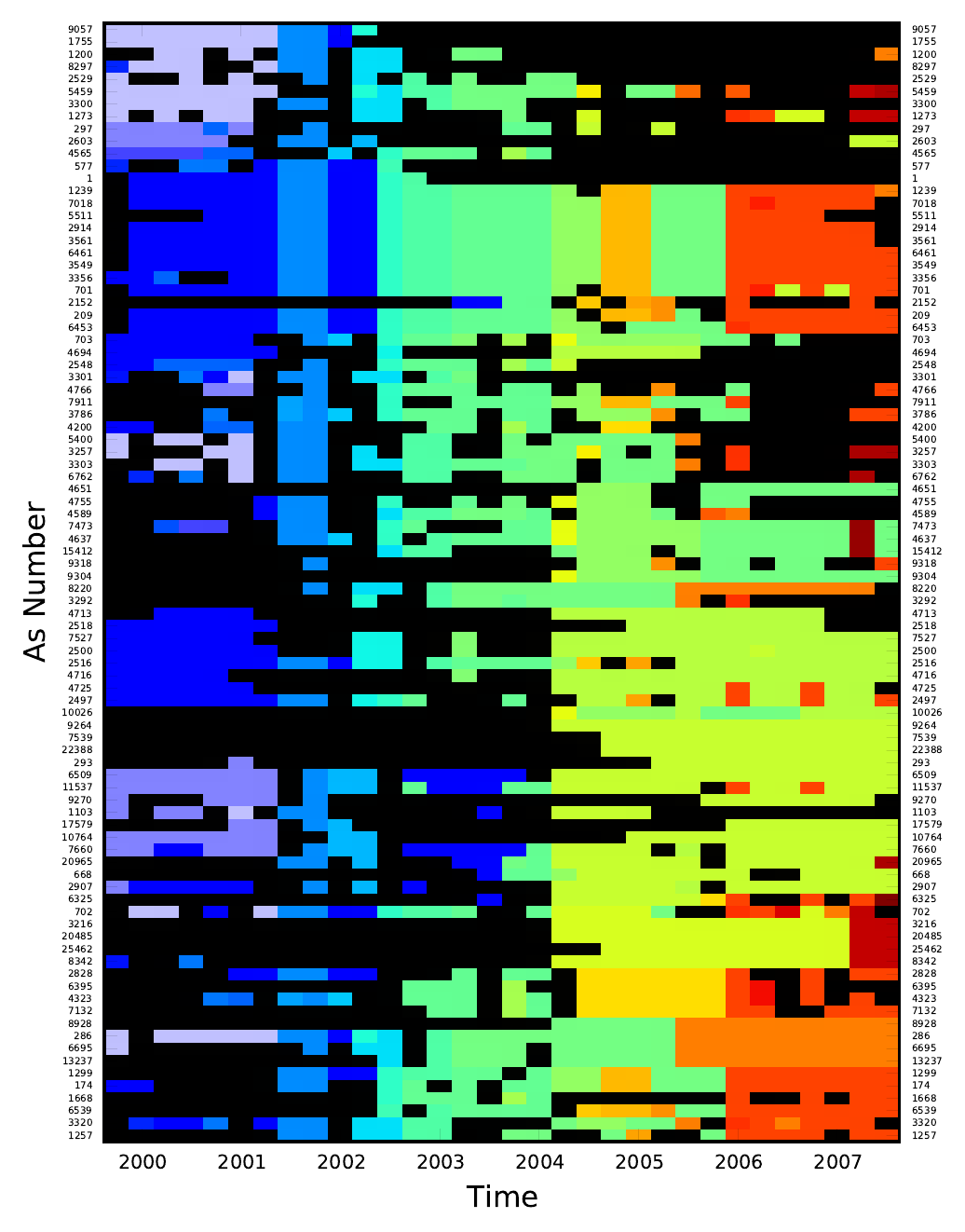}
\caption{Clustering result for the AS Link Dataset, overlaps not shown.}
\label{fig:skitter1}
\vspace{-0.15in}
\end{figure}

\iftechreport
\begin{figure}[!t]
\centering
\includegraphics[width = 1\linewidth]{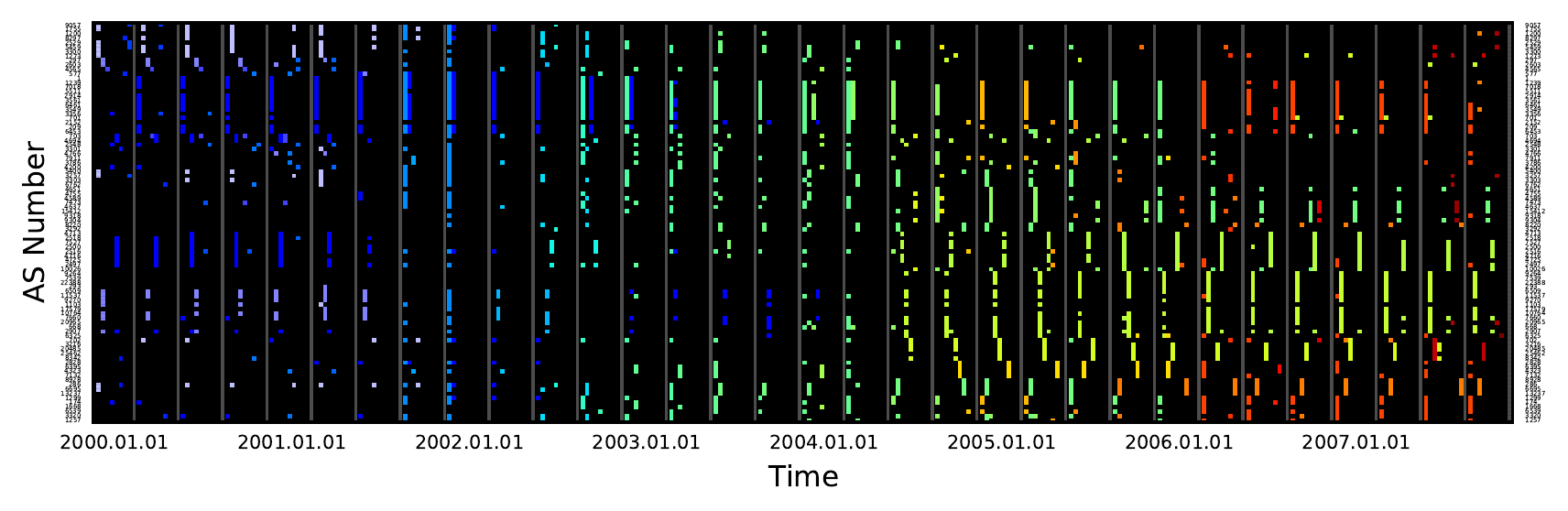}
\caption{Clustering with overlaps for the AS Link Dataset.}
\label{fig:skitter2}
\end{figure}
\fi

\section{Applications to Communication Networks}
\label{sec:apps}

We now describe some applications of
community detection to the design of communication networks.

\textbf{Routing in Disruption Tolerant Networks (DTNs):} DTNs are often formed by devices that are carried around by humans whose mobility 
patterns are strongly influenced by their social relationships. Thus, 
the structures of the social graph between the humans and the the contact graph between the devices are correlated.
While the contact graph can change rapidly, it usually possesses a relatively stable underlying structure that is 
a function of the less volatile social graph~\cite{Daly2007, bubble, Gao2009, Mtibaa2010, Pietilanen2012}. 
This can be used to develop  ``social aware'' routing strategies that use social metrics such as node 
centrality and community labels to make forwarding decisions~\cite{DTN_survey, Daly2007, bubble, Gao2009, Mtibaa2010 }. All of these schemes utilize some form of community detection on the contact graph to 
infer social relations between the mobile nodes. However, the community detection 
methods used are generally limited to the non-overlapping and even non time-varying case.  The 
community detection framework proposed in this paper can be used with any of these 
schemes while overcoming these limitations. This can result in significant performance 
gains when, for example, people  belong to multiple social groups (e.g., 
friends, family, co-workers, etc.).

\textbf{Efficient Caching in Content Centric Networks:} Content based networking is an emerging paradigm that does not require connection 
oriented protocols between producers and consumers of information in communication 
networks. 
Intelligent caching and replication of the content can 
significantly reduce access delays as well as the overhead costs associated with 
repeated querying and duplicate transmissions.
Recent work~\cite{slinky} proposes making use of the community structure of a 
MANET to determine nodes for content replication.  Assuming that the community 
structure changes on a slower time scale than the network topology, nodes in the
same community can cooperate to
provide an efficient and speedy access to content. 
The method proposed in this paper can
provide a principled approach to build distributed content caching protocols.





\textbf{Developing Realistic Mobility Models:} Much initial work on the design and analysis of routing algorithms for
mobile networks assumed simplistic mobility models such as random walk, random 
waypoint, etc. However, the analysis of mobility traces from many real-life scenarios
suggests that these simplistic models do not capture the details of real-world mobility 
characteristics such as periodicity and correlations due to social relationships between nodes.
Recent work on mobility modeling \cite{Hsu2009, Hossmann2011}
attempts to capture the dependence of the social relationships between nodes on their mobility
patterns. Community detection methods such as ours can be used to construct more refined 
mobility models that capture complex features such as the existence of overlapped 
communities as well as small yet persistent temporal communities. 

\section{Conclusion}

In this paper, we consider the problem of detecting overlapping temporal
communities in dynamic networks.  A convex optimization based approach is
proposed for this problem. Theoretical and experimental results show that our
method is capable of revealing interesting community structure that cannot be
detected by methods that do not allow overlap, or those that do not utilize
temporal information.  For simplicity, in this work we have focused on
unweighted graphs. In the future, we plan to extend our method to treat weighted
graphs as well as develop distributed versions of the algorithm. We believe our
methods have wide applications in studying the structure and evolution of
complex networked systems including communication networks and social networks.

\subsection*{Acknowledgements}

\small{Research was sponsored  by the Army Research Laboratory and was accomplished
under Cooperative Agreement Number W911NF-09-2-0053.}

\bibliographystyle{./IEEEtranBST1/IEEEtran}
\bibliography{cydong1,communities,applications}

\begin{thebibliography}{10}
\providecommand{\url}[1]{#1}
\csname url@samestyle\endcsname
\providecommand{\newblock}{\relax}
\providecommand{\bibinfo}[2]{#2}
\providecommand{\BIBentrySTDinterwordspacing}{\spaceskip=0pt\relax}
\providecommand{\BIBentryALTinterwordstretchfactor}{4}
\providecommand{\BIBentryALTinterwordspacing}{\spaceskip=\fontdimen2\font plus
\BIBentryALTinterwordstretchfactor\fontdimen3\font minus
  \fontdimen4\font\relax}
\providecommand{\BIBforeignlanguage}[2]{{%
\expandafter\ifx\csname l@#1\endcsname\relax
\typeout{** WARNING: IEEEtran.bst: No hyphenation pattern has been}%
\typeout{** loaded for the language `#1'. Using the pattern for}%
\typeout{** the default language instead.}%
\else
\language=\csname l@#1\endcsname
\fi
#2}}
\providecommand{\BIBdecl}{\relax}
\BIBdecl

\bibitem{Fortunato2010community}
S.~Fortunato, ``Community detection in graphs,'' \emph{Physics Reports}, vol.
  486, no. 3-5, pp. 75--174, 2010.

\bibitem{evolutionaryclustering}
D.~Chakrabarti, R.~Kumar, and A.~Tomkins, ``Evolutionary clustering,'' in
  \emph{ACM KDD}, 2006.

\bibitem{Facetnet}
Y.-R. Lin, Y.~Chi, S.~Zhu, H.~Sundaram, and B.~L. Tseng, ``Facetnet: a
  framework for analyzing communities and their evolutions in dynamic
  networks,'' in \emph{ACM WWW}, 2008.

\bibitem{kim2009particle}
M.~Kim and J.~Han, ``A particle-and-density based evolutionary clustering
  method for dynamic networks,'' \emph{Proceedings of the VLDB Endowment},
  vol.~2, no.~1, pp. 622--633, 2009.

\bibitem{estrangement}
V.~Kawadia and S.~Sreenivasan, ``Sequential detection of temporal communities
  by estrangement confinement,'' \emph{Scientific Reports 2}, 2012.

\bibitem{slinky}
V.~Kawadia, N.~Riga, J.~Opper, and D.~Sampath, ``Slinky: An adaptive protocol
  for content access in disruption-tolerant ad hoc networks,'' in \emph{ACM
  Intl. Workshop on Tactical Mobile Ad Hoc Networking}, 2011.

\bibitem{chen2012overlap}
Y.~Chen, H.~Xu, and S.~Sanghavi, ``Graph clustering with overlaps,''
  \emph{Manuscript. Submitted}.

\bibitem{agarwal2008modularity}
G.~Agarwal and D.~Kempe, ``Modularity-maximizing graph communities via
  mathematical programming,'' \emph{The European Physical Journal B}, vol.~66,
  no.~3, pp. 409--418, 2008.

\bibitem{Newman2006}
M.~E.~J. Newman, ``Modularity and community structure in networks,''
  \emph{PNAS}, vol. 103, no.~23, pp. 8577--8582, 2006.

\bibitem{candes2009robustPCA}
E.~Candes, X.~Li, Y.~Ma, and J.~Wright, ``{Robust principal component
  analysis?}'' \emph{Preprint arXiv:0912.3599}, 2009.

\bibitem{chandrasekaran2011siam}
V.~Chandrasekaran, S.~Sanghavi, S.~Parrilo, and A.~Willsky, ``Rank-sparsity
  incoherence for matrix decomposition,'' \emph{SIAM Journal on Optimization},
  vol.~21, no.~2, pp. 572--596, 2011.

\bibitem{chen2012sparseclustering}
Y.~Chen, S.~Sanghavi, and H.~Xu, ``Clustering sparse graphs,'' \emph{Advances
  in neural information processing systems 25}, 2012.

\bibitem{overlapping2012Survey}
J.~Xie, S.~Kelley, and B.~K. Szymanski, ``Overlapping community detection in
  networks: the state of the art and comparative study,'' \emph{To appear in
  ACM Computing Surveys}, vol. abs/1110.5813, 2011.

\bibitem{cazabet2010overlap}
R.~Cazabet, F.~Amblard, and C.~Hanachi, ``Detection of overlapping communities
  in dynamical social networks,'' in \emph{IEEE SocialCom}, 2010.

\bibitem{nguyen2011adaptive}
N.~Nguyen, T.~Dinh, Y.~Xuan, and M.~Thai, ``Adaptive algorithms for detecting
  community structure in dynamic social networks,'' in \emph{INFOCOM, 2011
  Proceedings IEEE}.\hskip 1em plus 0.5em minus 0.4em\relax IEEE, 2011, pp.
  2282--2290.

\bibitem{nguyen2011overlapping}
N.~Nguyen, T.~Dinh, S.~Tokala, and M.~Thai, ``Overlapping communities in
  dynamic networks: their detection and mobile applications,'' in \emph{ACM
  Mobicom}, 2011, pp. 85--96.

\bibitem{Graphscope}
J.~Sun, C.~Faloutsos, S.~Papadimitriou, and P.~S. Yu, ``Graphscope:
  parameter-free mining of large time-evolving graphs,'' in \emph{ACM KDD},
  2007.

\bibitem{timefall}
J.~Ferlez, C.~Faloutsos, J.~Leskovec, D.~Mladenic, and M.~Grobelnik,
  ``Monitoring network evolution using {MDL},'' in \emph{IEEE ICDE}, 2008.

\bibitem{Mucha2010}
P.~Mucha, T.~Richardson, K.~Macon, M.~Porter, and J.~Onnela, ``{Community
  structure in time-dependent, multiscale, and multiplex networks},''
  \emph{Science}, vol. 328, no. 5980, pp. 876--878, 2010.

\bibitem{chi2009evolutionary}
Y.~Chi, X.~Song, D.~Zhou, K.~Hino, and B.~Tseng, ``On evolutionary spectral
  clustering,'' \emph{ACM Trans on Knowledge Discovery from Data}, vol.~3,
  no.~4, p.~17, 2009.

\bibitem{bansal2004correlation}
N.~Bansal, A.~Blum, and S.~Chawla, ``Correlation clustering,'' \emph{Machine
  Learning}, vol.~56, no.~1, pp. 89--113, 2004.

\bibitem{ames2011clique}
B.~Ames and S.~Vavasis, ``Nuclear norm minimization for the planted clique and
  biclique problems,'' \emph{Mathematical Programming}, vol. 129, no.~1, pp.
  69--89, 2011.

\bibitem{recht2010guaranteed}
B.~Recht, M.~Fazel, and P.~Parrilo, ``{Guaranteed Minimum-Rank Solutions of
  Linear Matrix Equations via Nuclear Norm Minimization},'' \emph{SIAM Review},
  vol.~52, no. 471, 2010.

\bibitem{seung2001nmf}
D.~Seung and L.~Lee, ``Algorithms for non-negative matrix factorization,''
  \emph{Advances in neural information processing systems 13}, 2000.

\bibitem{condon2001algorithms}
A.~Condon and R.~Karp, ``Algorithms for graph partitioning on the planted
  partition model,'' \emph{Random Structures and Algorithms}, vol.~18, no.~2,
  pp. 116--140, 2001.

\bibitem{lakehursttrace}
Y.~Lu, F.~Wicker, Y.~Chen, P.~Li{\'o}, and D.~Towsley, ``On secure network
  structures in the lakehurst trace,'' 2008.

\bibitem{Eagle}
N.~Eagle and A.~Pentland, ``Reality mining: sensing complex social systems,''
  \emph{Personal Ubiquitous Comput.}, vol.~10, pp. 255--268, 2006.

\bibitem{COWTrade}
\BIBentryALTinterwordspacing
K.~Barbieri and O.~Keshk, ``Correlates of war project trade data set.''
  [Online]. Available: \url{correlatesofwar.org}
\BIBentrySTDinterwordspacing

\bibitem{skitter}
B.~Huffaker, Y.~Hyun, D.~Andersen, and kc~claffy, ``The {Skitter AS Links
  Dataset}.''

\bibitem{edwards2012internet}
B.~Edwards, S.~Hofmeyr, G.~Stelle, and S.~Forrest, ``Internet topology over
  time,'' \emph{Preprint arXiv:1202.3993}, 2012.

\bibitem{Daly2007}
E.~M. Daly and M.~Haahr, ``Social network analysis for routing in disconnected
  delay-tolerant manets,'' in \emph{ACM Mobihoc}, 2007, pp. 32--40.

\bibitem{bubble}
P.~Hui, J.~Crowcroft, and E.~Yoneki, ``Bubble rap: Social-based forwarding in
  delay-tolerant networks,'' \emph{Mobile Computing, IEEE Transactions on},
  vol.~10, no.~11, pp. 1576 --1589, nov. 2011.

\bibitem{Gao2009}
W.~Gao, Q.~Li, B.~Zhao, and G.~Cao, ``Multicasting in delay tolerant networks:
  a social network perspective,'' in \emph{ACM MobiHoc}, 2009.

\bibitem{Mtibaa2010}
A.~Mtibaa, M.~May, C.~Diot, and M.~Ammar, ``Peoplerank: social opportunistic
  forwarding,'' in \emph{IEEE INFOCOM}, 2010.

\bibitem{Pietilanen2012}
A.-K. Pietil\"{a}nen and C.~Diot, ``Dissemination in opportunistic social
  networks: the role of temporal communities,'' in \emph{ACM MobiHoc}, 2012.

\bibitem{DTN_survey}
Y.~Zhu, B.~Xu, X.~Shi, and Y.~Wang, ``A survey of social-based routing in delay
  tolerant networks: Positive and negative social effects,'' \emph{IEEE Comm.
  Surveys Tutorials}, vol.~PP, no.~99, pp. 1--15, 2012.

\bibitem{Hsu2009}
W.-J. Hsu, T.~Spyropoulos, K.~Psounis, and A.~Helmy, ``Modeling spatial and
  temporal dependencies of user mobility in wireless mobile networks,''
  \emph{IEEE/ACM Trans. Netw.}, vol.~17, pp. 1564--1577, 2009.

\bibitem{Hossmann2011}
T.~Hossmann, T.~Spyropoulos, and F.~Legendre, ``Putting contacts into context:
  mobility modeling beyond inter-contact times,'' in \emph{ACM MobiHoc}, 2011,
  pp. 18:1--18:11.

\bibitem{halko2011finding}
N.~Halko, P.~Martinsson, and J.~Tropp, ``Finding structure with randomness:
  Probabilistic algorithms for constructing approximate matrix
  decompositions,'' \emph{SIAM review}, vol.~53, no.~2, pp. 217--288, 2011.

\end{thebibliography}

\appendix

\subsection{Fast Algorithm}\label{sec:fast_algo}

Solving the program \eqref{eq:cvx_opt} using standard package is feasible only
for small or medium size problems. In this section, we describe
a faster algorithm that is suitable for larger scale datasets. Our
method is based on matrix factorization. Each positive semidefinite
cover matrix $Y^{t}$ can be factorized as $Y^{t}=U^{t}U^{t\top}$, where
$U^{t}\in\mathbb{R}^{n\times r}$ and $\left\Vert Y^{t}\right\Vert _{*}=\left\Vert U^{t}\right\Vert _{F}^{2}$. Here $ r $ is any upper-bound on the number of clusters at each snapshot; one can always use $ r=n $, but a smaller value is more desirable.
We consider the Lagrangian of the original constrained formulation.
The optimization problem becomes 
\begin{align}
\max_{U^{t}} \;
\sum_{t=1}^{T} \! f(U^{t}U^{t\top}|A^{t}) \!-\! \gamma\! \sum_{t=1}^{T-1}\! d(Y^{t+1},Y^{t}),\;\;
\textrm{s.t.} \left\Vert U^{t}\right\Vert _{F}^{2} \!\le\! B.\label{eq:fac_opt}
\end{align}
Choosing the multiplier $\gamma $ is equivalent to choosing $ \delta $ in the original formulation. We use (sub-)gradient descent to solve the problem:
\begin{align}
U^{t}\leftarrow\mathcal{P}_{\sqrt{B}}&\left[ U^{t}+\tau^t\left(\nabla f(U^{t}U^{t\top})-\gamma\nabla_{2}d(U^{t+1}U^{t+1\top},U^{t}U^{t\top})\right.\right.\nonumber\\
&\;\left.\left.-\gamma\nabla_{1}d(U^{t}U^{t\top},U^{t-1}U^{t-1})\right)U^{t}\right],\label{eq:gradient}
\end{align}
where $ \nabla $ is the sub-gradient operator,  $ \nabla_i $ denotes the (sub-)gradient w.r.t. the $ i $-th argument,  $\mathcal{P}_{\sqrt{B}}(X)$ is the Euclidean projection of $X$ onto
the Frobenius norm ball $\left\{ Z:\left\Vert Z\right\Vert _{F}\le\sqrt{B}\right\} $
(i.e., scale down $X$ to have Frobenius norm $\sqrt{B}$ if and only if it is
outside the ball), and $ \tau^t $ is the step size.  As for all gradient descent methods, the above procedure is guaranteed to converge provided $ \tau^t \rightarrow 0 $. In this paper, we use a geometrically decreasing step size $ \tau^t = 0.001\cdot 0.995^t $. 

\subsection{Complexity and Scalability}\label{sec:complexity}

We analyze the memory and time complexity of the gradient descent algorithm.

\textbf{Memory complexity}: We need to store the adjacency matrices $\{A^t\}$
and the factorization $\{U^t\}$, which requires
$O(E)$ and $O(nrT)$ memory, respective; here $E$ is the total number of edges in all snapshots,
$n$ the number of nodes, $r$ the maximum number of clusters at each snapshot, 
and $T$ number of snapshots. The total memory complexity is $O(E+nrT)$.
The online implementation suggested in Section \ref{sec:algorithm_remark} will further alleviate the
dependence on $T$.

\textbf{Time complexity}: The algorithm requires time $T_1$ for
computing a initial point, and  $T_2$ for
each iteration with $M$ iterations.
Here we initialize $U^t$ by taking a rank-$r$ SVD of $A^t$. For each $t$ this can
be done in time $O(rE_t + nr^2)$ (see
~\cite{halko2011finding}), where $E_t$ is the number of edges in snapshot $t$.
So $T_1=O(rE + nr^2T)$. Now consider the update \eqref{eq:gradient}.
The computation of the product of three (sub-)gradient operators with $U^t$ takes time
$O(r^2 E_t+ nr)$, $O(r^2E_t)$ and $O(r^2E_t)$, respectively, by taking advantage of 
the fact that we can use any sub-gradient. The summation and
the projection both take $O(nr)$. We thus have $T_2= O(r^2E+nrT)$.  The total time
complexity is then $O(nr^2T+ M r^2E+ MnrT)$. Characterizing the number of iterations $M$ 
needed for a specified accuracy rigorously is difficult, However, as observed empirically in our simulations and many other
studies, $ M $ is independent of $ E $, $n$ and $T$, and can be treated as $ O(1) $.

In summary, with a bounded number of clusters $r$, both the space
and time complexity scale linearly in $E$ and $nT$. This is the best one can hope
for, as it takes at least this much space and time to read the input and write
down the final solution.

%
%
%

\iftechreport

\subsection{Proof of Theorem \ref{thm:no_jump}}\label{sec:proof_of_theorem}

The following lemma shows that it suffices to study the Lagrangian formulation. Recall that $ \Vert X \Vert_1 =\sum_{i,j} |X_{ij}|$ is the matrix $ \ell_1 $ norm of $ M $. Let $ \circ $ denote the entry-wise product.
\begin{lemma}
$ Y^* $ is the unique optimal solution to \eqref{eq:no_jump_opt} if there exists a $ \lambda $ such that $ Y^* $ is the unique optimal solution to the following problem
\begin{equation}
\min_{Y}\left\Vert Y\right\Vert _{*}+\lambda\sum_{t} \Vert C\circ (Y-A^t)\Vert_1\label{eq:no_jump_opt_lang}
\end{equation}
\end{lemma}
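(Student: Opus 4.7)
The claim is a standard translation between a constrained program and its (partially) Lagrangianized relaxation, with the only subtlety being the correct handling of uniqueness. The plan is to show that $Y^{*}$ is feasible for \eqref{eq:no_jump_opt} with the nuclear norm constraint tight, and then use optimality of $Y^{*}$ in \eqref{eq:no_jump_opt_lang} to derive strict inequality against any alternative feasible competitor.

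First I would verify feasibility: since $Y^{*}=U^{*}U^{*\top}$ for the indicator matrix $U^{*}$ of a disjoint partition of $n$ nodes into $r$ clusters of size $K$, we have $\|Y^{*}\|_{*} = \operatorname{tr}(Y^{*}) = \sum_i Y^{*}_{ii} = n$ by the discussion around \eqref{eq:trace_norm}. Thus the nuclear-norm constraint is active at $Y^{*}$.

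Next, let $\tilde Y \ne Y^{*}$ be any other feasible point for \eqref{eq:no_jump_opt}, i.e., $\|\tilde Y\|_{*}\le n$. By assumption, $Y^{*}$ is the \emph{unique} optimum of \eqref{eq:no_jump_opt_lang}, so
\begin{equation*}
\|Y^{*}\|_{*} + \lambda \sum_{t}\|C\circ(Y^{*}-A^{t})\|_{1} \;<\; \|\tilde Y\|_{*} + \lambda \sum_{t}\|C\circ(\tilde Y-A^{t})\|_{1}.
\end{equation*}
Because $\|\tilde Y\|_{*}\le n = \|Y^{*}\|_{*}$, subtracting the nuclear-norm terms from both sides (they can only help $\tilde Y$) and dividing by $\lambda>0$ yields
\begin{equation*}
\sum_{t}\|C\circ(Y^{*}-A^{t})\|_{1} \;<\; \sum_{t}\|C\circ(\tilde Y-A^{t})\|_{1},
\end{equation*}
which is exactly the assertion that $Y^{*}$ beats $\tilde Y$ in \eqref{eq:no_jump_opt}. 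Since $\tilde Y$ was arbitrary, $Y^{*}$ is the unique minimizer of \eqref{eq:no_jump_opt}.

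The only real step to watch is making sure the nuclear norm constraint in \eqref{eq:no_jump_opt} is set exactly to $n$ (the same budget implicitly used in \eqref{eq:no_jump_opt_lang}) so that $\|\tilde Y\|_{*}\le \|Y^{*}\|_{*}$ holds for every competitor; this is why the hypothesis needs $\lambda>0$ and why feasibility of $Y^{*}$ must be verified first. No further obstacle arises, since the chain of inequalities is elementary once uniqueness in the Lagrangian form is assumed. I would not need to construct $\lambda$ here; producing an explicit $\lambda$ for which $Y^{*}$ is the unique minimizer of \eqref{eq:no_jump_opt_lang} is the harder task and is presumably the content of the subsequent analysis that uses the multi-snapshot planted partition model.
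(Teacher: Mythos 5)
Your proof is correct, and it is a leaner route than the one the paper takes. Both arguments hinge on the same two facts --- that the trace-norm constraint is active at the truth, $\left\Vert Y^{*}\right\Vert _{*}=\operatorname{tr}(Y^{*})=n$, and that $Y^{*}$ \emph{uniquely} minimizes the penalized objective for some $\lambda>0$ --- but they deploy them differently. The paper first establishes that $Y^{*}$ is \emph{an} optimizer of \eqref{eq:no_jump_opt} via a Lagrangian chain $\min_{g(Y)\le n}h(Y)\ge\max_{\lambda'}\min_{Y}\bigl[h(Y)+\tfrac{1}{\lambda'}(g(Y)-n)\bigr]\ge\tfrac{1}{\lambda}\min_{Y}\bigl[\lambda h(Y)+g(Y)-n\bigr]=h(Y^{*})$, and then runs a separate contradiction argument for uniqueness by showing any other constrained optimizer would also attain the Lagrangian minimum. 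You instead compare $Y^{*}$ directly against an arbitrary feasible competitor $\tilde Y\ne Y^{*}$: strict optimality in \eqref{eq:no_jump_opt_lang} plus $\left\Vert \tilde Y\right\Vert _{*}\le n=\left\Vert Y^{*}\right\Vert _{*}$ gives $\lambda\sum_{t}\Vert C\circ(Y^{*}-A^{t})\Vert_{1}<\lambda\sum_{t}\Vert C\circ(\tilde Y-A^{t})\Vert_{1}$ in one step, yielding optimality and uniqueness simultaneously. Your version buys simplicity and sidesteps the duality formalism entirely --- in particular it does not rely on the equality $\min_{g\le n}h=\max_{\lambda'}\min_{Y}[\cdots]$ that the paper asserts (strong duality), which the paper does not actually need and which introduces a small blemish (there is also a $\lambda$ versus $1/\lambda$ inconsistency in its uniqueness step). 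What the paper's phrasing buys is a template that generalizes when the constraint is not active at $Y^{*}$, since the dual chain does not require $g(Y^{*})=n$ exactly, only $g(Y^{*})\le n$ together with complementary-slackness-type reasoning; but in this setting the constraint is exactly tight, so your direct exchange argument is fully rigorous. You are also right that the substantive work --- exhibiting a $\lambda$ for which $Y^{*}$ uniquely solves \eqref{eq:no_jump_opt_lang} --- lives in the dual-certificate construction that follows, not in this lemma.
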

\begin{proof}
Let $ g(Y) =\left\Vert Y\right\Vert _{*}  $ and $ h(Y)= \sum_{t} \Vert C\circ (Y-A^t)\Vert_1$. Note that $g(Y^*)=n$. By standard convex analysis and the fact that $ Y^* $ is optimal to \eqref{eq:no_jump_opt_lang}, we have the following chain of inequality:
\begin{align*}
\eqref{eq:no_jump_opt} & = \min_{Y:g(Y)\le n} h(Y) = \max_{\lambda'} \min_{Y} h(Y) + \frac{1}{\lambda'}(g(Y)-n)  \\
& \ge \frac{1}{\lambda} \min_Y  \lambda h(Y) + (g(Y)-n) = h(Y^*)\ge \eqref{eq:no_jump_opt}.
\end{align*}
Therefore, equality holds above, which proves that $ Y^* $ is \emph{an} optimal
solution to \eqref{eq:no_jump_opt}. We prove uniqueness by contradiction. If $ Y^* $ is not the unique optimal solution to \eqref{eq:no_jump_opt}, then there exists $ Y' $ with $ g(Y') \le n $ and $ h(Y') = \eqref{eq:no_jump_opt} $. Using the equality we just proved, we have 
$$ 
h(Y') + \lambda(g(Y')-n )  \le h(Y') = \eqref{eq:no_jump_opt} = \frac{1}{\lambda} \min_Y  \lambda h(Y) + (g(Y)-n),
$$
which contradicts the assumption that $ Y^* $ is the unique optimal solution to \eqref{eq:no_jump_opt_lang}.
\end{proof}

To prove Theorem \ref{thm:no_jump}, it  suffices to show that $ Y^* $ is the unique optimal solution to \eqref{eq:no_jump_opt_lang} with $\lambda=\sqrt{\frac{1}{16mn}}$. We do this by showing that any other solution $Y^{*}+\Delta$ with $\Delta\neq0$
has a higher objective value. 

We define a matrix $W$ which serves
as a dual certificate. Let $S^{t}=A^{t}-Y^{*}$, $\Omega_{t}=\left\{ (i,j)|S^{t}_{ij}\neq0\right\} $, $ R=\{(i,j)|Y_{ij}=1\} $,
and $U$ be the matrix whose columns are the singular vectors of $Y^{*}$.
For any entry set $ \Omega \subseteq [n]\times[n]$, let $ 1_\Omega $ denote the matrix whose entries in $ \Omega $ equals $ 1 $ and others $ 0 $. 
Define $W=\sum_{t=1}^{m}V^{t}+\sum_{t}Z^{t}$, where 
\begin{align*}
V^{t} & =  \frac{1}{m}\left(-P_{\Omega_{t}}UU^{\top}+\frac{1-p}{p}P_{\Omega_t^{c}}UU^{\top}\right)\\
Z^{t} & =  2\lambda\left(C\circ S^t +  \frac{1-p}{p}\sum_{(i,j) \in R\cap\Omega_{t}^{c}} (1-s)1_{(i,j)}  \right.\\
& \left.\qquad\;\quad - \frac{q}{1-q} \sum_{(i,j) \in R^c\cap\Omega_{t}^{c}} s 1_{(i,j)} \right).
\end{align*}
Due to the randomness in $ \Omega^t $, both $\sum_{t}V^{t}$ and $\sum_{t}Z^{t}$ are random matrices with
independent zero-mean entries, whose variances are bounded by $\frac{1}{K^{2}m}$
and $4\lambda^{2}m$ due to the setup of the model. Under our choice of $ \lambda $ and the assumption of the theorem,
they are further bounded by $\frac{1}{2n}$. Let $ \Vert \cdot \Vert $ be the spectral norm (the largest singular value). Standard bounds
on the spectral norm of random matrices guarantees that with probability converging to one,
\[
\left\Vert P_{T^{\bot}}W\right\Vert \le\left\Vert \sum_{t}V^{t}\right\Vert +\left\Vert \sum_{t}Z^{t}\right\Vert \le1.
\]
It follows that $UU^{\top}+P_{T^{\bot}}W$ is a subgradient of $\left\Vert Y\right\Vert _{*}$,
which means $\left\langle Y^{*}+\Delta,UU^{\top}+P_{T^{\bot}}W\right\rangle \ge\left\langle \Delta,UU^{\top}+P_{T^{\bot}}W\right\rangle $
for all $\Delta$. Also define $F^{t}=-\mbox{sign}(P_{\Omega_t^{c}}(\Delta^{t}))$,
where $\mbox{sign}(\cdot)$ is the signum function, so $\left\langle F^{t},\Delta^{t}\right\rangle =\left\Vert P_{\Omega_{t}}\Delta^{t}\right\Vert _{1}$.
We also know $C\circ (S^{t}+F^{t})$ is a subgradient of $\left\Vert C \circ S^{t}\right\Vert _{1}$,
so $\left\Vert C\circ (S^{t}-\Delta)\right\Vert _{1}-\left\Vert C\circ S^{t}\right\Vert _{1}\ge\left\langle C\circ (S^t+F^t),-\Delta\right\rangle $.
Combining the above discussion, we have
\begin{align*}
 & \left\Vert Y+\Delta\right\Vert _{*}-\left\Vert Y\right\Vert_* +\lambda\sum_{t}\left(\left\Vert C\circ (S^{t}-\Delta)\right\Vert _{1}-\left\Vert C\circ S^{t}\right\Vert _{1}\right)\\
 \ge &  \left\langle UU^{\top}+P_{T^{\bot}}W,\Delta\right\rangle +\lambda\sum_{i}\left\langle C\circ (S^t+F^t),-\Delta\right\rangle 
\end{align*}
We bound each of the above two terms. Notice that 
\begin{align*}
 & \left\langle UU^{\top}+P_{T^{\bot}}W,\Delta\right\rangle  
 =   \left\langle UU^{\top}+W,\Delta\right\rangle -\left\langle P_{T}W,\Delta\right\rangle \\
 = & \sum_{t}\left\langle (P_{\Omega_{t}}+P_{\Omega^{c}_t})(\frac{1}{m}UU^{\top}+V^{t}+Z^{t}),\Delta\right\rangle 
  -\left\langle P_{T}W,\Delta\right\rangle \\
 \ge &  2\lambda\sum_{t}\left\Vert P_{\Omega_{t}}(C\circ \Delta)\right\Vert _{1} 
 -\left\Vert P_{T}W\right\Vert _{\infty}\left\Vert \Delta\right\Vert _{1}\\
 &  +\sum_{t}\left\langle \frac{1}{m}\frac{1}{1-q}P_{\Omega_{t}^{c}}UU^{\top}+ 2\lambda\frac{1-p}{p}\sum_{(i,j) \in R\cap\Omega_{t}^{c}} (1-s)1_{(i,j)}  \right.\\
 & \quad\quad \left.- 2\lambda\frac{q}{1-q} \sum_{(i,j) \in R^c\cap\Omega_{t}^{c}} s 1_{(i,j)},\Delta\right\rangle;
\end{align*}
here $ \Vert M\Vert_\infty := \max_{i,j} |M_{ij}| $ is the matrix $ \ell_\infty $ norm. Under the assumption of the theorem, we have 
\begin{align*}
 & \left\langle \frac{1}{m}\frac{1}{1-q}P_{\Omega_{t}^{c}}UU^{\top} + 2\lambda\frac{1-p}{p}\sum_{(i,j) \in R\cap\Omega_{t}^{c}} (1-s)1_{(i,j)},\Delta\right\rangle \\
 \ge & -\frac{1}{2}\lambda\left\Vert P_{R\cap\Omega_t^{c}} (C\circ \Delta)\right\Vert _{1}
\end{align*}
and 
\begin{align*}
\left\langle- 2\lambda\frac{q}{1-q} \sum_{(i,j) \in R^c\cap\Omega_{t}^{c}} s 1_{(i,j)},\Delta\right\rangle \ge-\frac{1}{2}\lambda\left\Vert P_{R^{c}\cap\Omega^{tc}} (C\circ\Delta) \right\Vert _{1}.
\end{align*}
Moreover, observe that each entry of $\left(P_{T}W\right)_{ij}=\frac{1}{K}\sum_{k=1}^{K}W_{ij}$,
which is the sum of independent random variables with bounded variance as previously  discussed. Under the assumption of the theorem, this sum is bounded
by $\frac{1}{\sqrt{K}}\sqrt{\frac{1}{2n}}\le\frac{1}{4}m\lambda \min\{s,1-s\}$
with probability converging to one by standard Bernstein inequality; $ \Vert P_T W\Vert_{\infty} $ is bounded by the same quantity using a union bound. It follows
that
\[
\left\langle UU^{\top}+P_{T^{\bot}}W,\Delta\right\rangle \ge\frac{7}{4}\lambda\sum_{t}\left\Vert P_{\Omega_{t}}(C\circ \Delta)\right\Vert _{1}-\frac{3}{4}\lambda\sum_{t}\left\Vert P_{\Omega_t^{c}}(\circ \Delta)\right\Vert _{1}
\]
On the other hand, we have 
\begin{align*}
 & \lambda\sum_{t}\left\langle C\circ (S_{t}+F_{t}),-\Delta\right\rangle \\
 = & -\lambda\sum_{t}\left\Vert P_{\Omega_{t}}(C\circ \Delta)\right\Vert _{1}+\lambda\sum_{t}\left\Vert P_{\Omega_{t}^{c}}(C\circ \Delta)\right\Vert _{1}.
\end{align*}
Combining pieces, we obtain
\begin{align*}
   & \left\Vert Y^{*}+\Delta\right\Vert _{*}-\left\Vert Y\right\Vert _{*}+\lambda\sum_{t}\left(\left\Vert C\circ (S^t-\Delta)\right\Vert _{1}-\left\Vert C\circ S^t\right\Vert _{1}\right)\rangle \\
  \ge & \left\langle UU^{\top}+P_{T^{\bot}}W,\Delta\right\rangle - \lambda\sum_{t}\left\Vert P_{\Omega^{t}}(C\circ \Delta)\right\Vert _{1}+\lambda\sum_{t}\left\Vert P_{\Omega^{tc}}(C\circ \Delta)\right\Vert _{1}\\
  \ge & \frac{7}{4}\lambda\sum_{t}\left\Vert P_{\Omega^{t}}(C\circ \Delta)\right\Vert_{1}-\frac{3}{4}\lambda\sum_{t}\left\Vert P_{\Omega^{tc}}(C\circ \Delta)\right\Vert_{1} \\
  & -\lambda\sum_{t}\left\Vert P_{\Omega^{t}}(C\circ \Delta)\right\Vert_{1}+\lambda\sum_{t}\left\Vert P_{\Omega^{tc}}(C\circ \Delta)\right\Vert_{1}\\
  > & 0.
\end{align*}
This completes the proof of the theorem.

\fi

\end{document}